\def\eqref#1{equation~\ref{#1}}
\def\1{\bm{1}}
\newcommand{\train}{\mathcal{D}}
\DeclareMathAlphabet{\mathsfit}{\encodingdefault}{\sfdefault}{m}{sl}
\SetMathAlphabet{\mathsfit}{bold}{\encodingdefault}{\sfdefault}{bx}{n}
\newcommand{\E}{\mathbb{E}}
\DeclareMathOperator{\TV}{TV}
\newcommand{\gen}{\mathrm{gen}}
\newcommand{\Unif}{\mathrm{Unif}}
\newcommand{\rowlone}{\mathrm{row}\text{-}\ell_1}
\newcommand{\Div}{\operatorname{div}} 
\newtheorem{claim}{Claim}
\newtheorem{proposition}{Proposition}
\newtheorem{theorem}{Theorem}
\newtheorem{corollary}{Corollary}
\definecolor{aiolaPrimary}{HTML}{00C57E}
\definecolor{aiolaBackPrimary}{HTML}{474a49}
\colorlet{aiolaGray}{aiolaBackPrimary!7.5} 
\definecolor{aiolaPrimaryDark}{HTML}{1a8a65}
\titleformat{\section}{\bfseries\Large}{\thesection}{0.5em}{}
\titleformat{\subsection}{\bfseries\large}{\thesubsection}{0.5em}{}
\titlespacing*{\section}{0pt}{1.0ex plus .5ex minus .2ex}{0.6ex}
\titlespacing*{\subsection}{0pt}{0.6ex plus .4ex minus .2ex}{0.4ex}
\newcommand{\PaperTitle}{Drax: Speech Recognition with Discrete Flow Matching}
\newcommand{\Authors}{\textbf{Aviv Navon}\thanks{Correspondence to: \texttt{aviv@aiola.com}}, \textbf{Aviv Shamsian}, \textbf{Neta Glazer}, \textbf{Yael Segal-Feldman}, \textbf{Gill Hetz}, \textbf{Joseph Keshet}, \textbf{Ethan Fetaya}}
\newcommand{\Affils}{{\textit{\small aiOla Research}}}
\newcommand{\PrintAuthorThanks}{%
  {\begingroup
    \renewcommand{\thefootnote}{\fnsymbol{footnote}}
    \footnotesize \@thanks
  \endgroup}%
  \global\let\@thanks\@empty
}
\newenvironment{TitlePanel}{%
  \begin{center}
  \begin{tcolorbox}[colback=aiolaGray, colframe=aiolaGray, boxrule=0pt, arc=3mm]
    {\bfseries\LARGE \PaperTitle\par}\vspace{1.em}
    {\normalsize {\renewcommand{\thefootnote}{\fnsymbol{footnote}}\setcounter{footnote}{0}%
      \Authors}\par}
    {\Affils\par}
    \vspace{1.0em}
}{%
    \par
    \vspace{0.25em}
  \end{tcolorbox}
  \PrintAuthorThanks 
  \end{center}\vspace{0.5em}
}
\newcommand{\ourmethod}{Drax}
\titlespacing*{\section}{0pt}{.8em plus .4em minus .3em}{.8em}
\titlespacing*{\subsection}{0pt}{0.8em plus .4em minus .3em}{0.5em}
\begin{document}

\begin{TitlePanel}
Diffusion and flow-based non-autoregressive (NAR) models have shown strong promise in large language modeling, however, their potential for automatic speech recognition (ASR) remains largely unexplored. We propose \ourmethod{}, a discrete flow matching framework for ASR that enables efficient parallel decoding. To better align training with inference, we construct an audio-conditioned probability path that guides the model through trajectories resembling likely intermediate inference errors, rather than direct random noise to target transitions. Our theoretical analysis links the generalization gap to divergences between training and inference occupancies, controlled by cumulative velocity errors, thereby motivating our design choice. Empirical evaluation demonstrates that our approach attains recognition accuracy on par with state-of-the-art speech models while offering improved accuracy-efficiency trade-offs, highlighting discrete flow matching as a promising direction for advancing NAR ASR.
\end{TitlePanel}

\section{Introduction}

Automatic speech recognition (ASR) has become a core component of modern machine learning systems, enabling speech-based interfaces, multilingual communication, and accessibility applications. 
Recent progress has been driven by large-scale autoregressive (AR) encoder-decoder models such as Whisper~\citep{radford2023robust} and Qwen2-Audio~\citep{chu2024qwen2}, which achieve remarkable accuracy across languages and domains. 
However, the sequential nature of AR decoding creates an inherent efficiency bottleneck: tokens must be generated one by one, resulting in inference latency that scales with sequence length and limits the deployment of low-latency or large-scale applications~\citep{gu2018nonautoregressive,chen2023accelerating,fu2024break}.

Non-autoregressive (NAR) generative models based on diffusion and flow matching have recently emerged as a powerful paradigm for sequence modeling~\citep{austin2021structured,li2022diffusion,gat2024discrete,shaul2024flow}. 
These methods enable parallel generation across sequence positions and expose a natural accuracy-efficiency trade-off controlled by the number of inference steps. 
In particular, Discrete Flow Matching (DFM)~\citep{gat2024discrete,campbell2024generative} provides a simulation-free framework for training discrete generative models and has shown competitive performance in text domains.

Non-autoregressive approaches for ASR, most notably Connectionist Temporal Classification (CTC)~\citep{graves2006ctc,graves2014towards}, have been widely adopted, yet remain outside the leading paradigm in state-of-the-art systems. Generative NAR formulations, including recent diffusion and flow-based models, are still emerging, with only limited empirical studies to date~\citep{baas2022transfusion,yeh2024cross,kwon2025whisfusion}. As a result, the design space for generative NAR ASR is far less developed than that of AR systems, underscoring the need for principled methods that balance efficiency and accuracy.

Moreover, most applications of DFM have relied on simple path constructions, typically defined as a two-way mixture between a noise-like source distribution (e.g., uniform or masked tokens) and the ground-truth target sequence. This path design means that the model only learns transitions from pure noise to the exact target. While this mismatch may already hinder text generation, it is particularly problematic for ASR: during inference, the model will traverse \emph{acoustically plausible but imperfect} intermediate states, including substitutions, insertions, and deletions~\citep{havasi2025edit}, that are consistent with the input audio but differ from the ground-truth transcription. The resulting discrepancy between training and inference occupancies resembles the train-sampling mismatch induced by teacher forcing in AR models~\citep{bengio2015scheduled,ranzato2015sequence}, and motivates the need for richer path designs that better align training dynamics with inference conditions.

In this work, we propose \ourmethod{}, a framework for NAR ASR based on DFM. Our key idea is to augment the probability path with an \emph{audio-conditioned middle distribution}, which serves as a bridge between the noise-like source and the sharp. Figure~\ref{fig:main} illustrates our approach. By exposing the decoder to acoustically consistent but imperfect hypotheses, this tri-mixture path mitigates the domain gap between training and inference. 
Theoretically, we provide a generalization analysis that relates the risk gap between training and inference to divergences between their respective occupancies.
This insight motivates the introduction of an intermediate audio-conditioned distribution and our path design choice. 
Our experiments demonstrate that the proposed approach achieves competitive accuracy with state-of-the-art ASR baselines while offering favorable runtime-accuracy trade-offs. We further show that Drax benefits from candidate scoring strategies and integrates naturally with speculative decoding, highlighting its potential as an efficient and flexible NAR framework for speech recognition. 

Our contributions are as follows: (i) We introduce \ourmethod{}, a novel non-autoregressive framework for ASR based on a tri-mixture probability path with an audio-conditioned middle distribution. (ii) We provide a theoretical analysis showing that generalization in flow-based models is governed by the divergence between training and inference occupancies, motivating our design. 
(iii) Through extensive experiments, we demonstrate that \ourmethod{} improves recognition accuracy over standard DFM paths, and achieves favorable accuracy-efficiency trade-offs compared to AR baselines. To support future research and the reproducibility of
the results, we make our source code publicly available at: \url{https://github.com/aiola-lab/drax}.

\begin{figure}[t]
    \centering
    
    \begin{subfigure}[b]{0.35\textwidth}
        \centering
        \includegraphics[width=\textwidth]{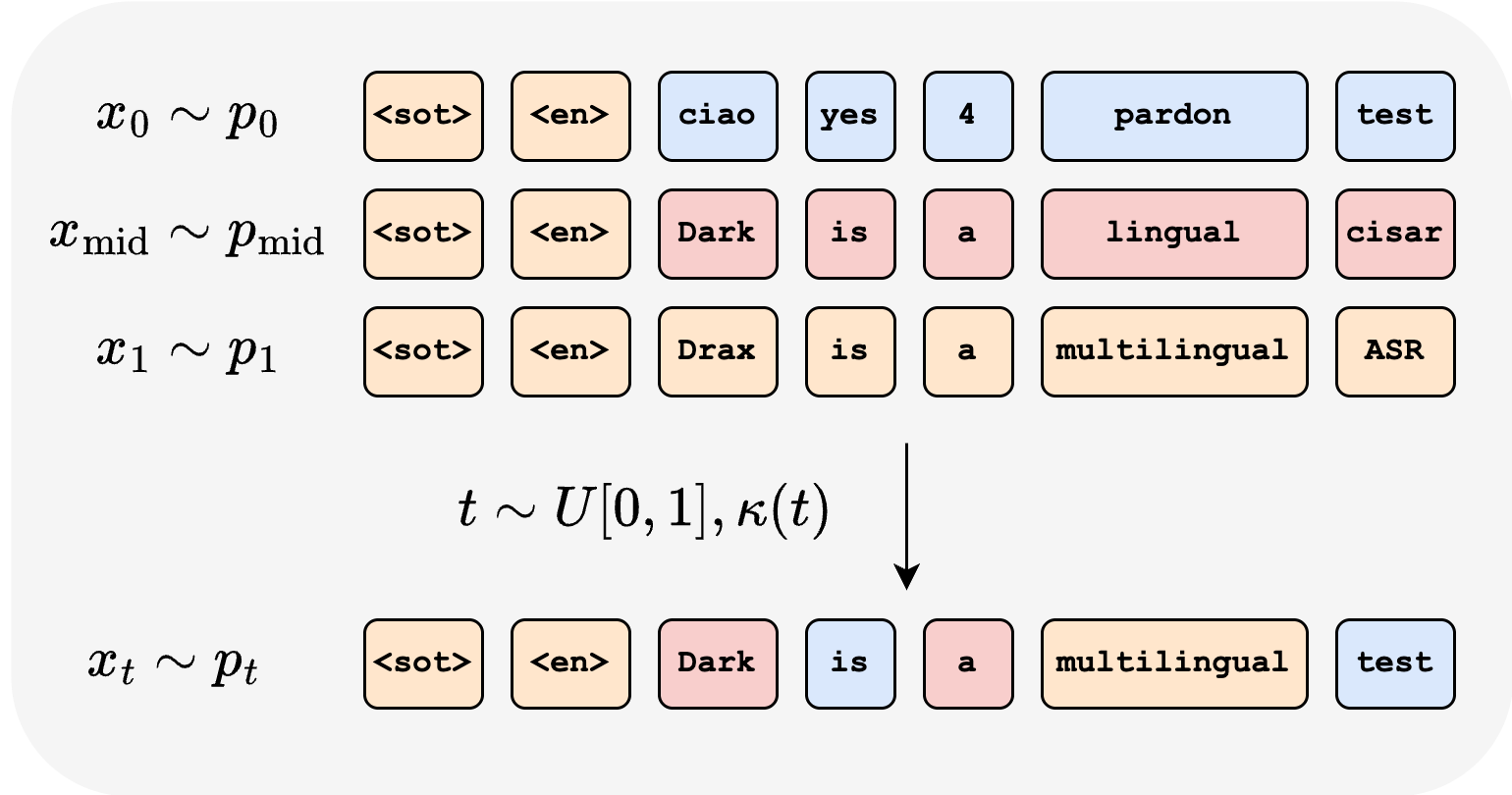}
        \caption{Training}
        \label{fig:training}
    \end{subfigure}
    \hfill
    \begin{subfigure}[b]{0.35\textwidth}
        \centering
        \includegraphics[width=\textwidth]{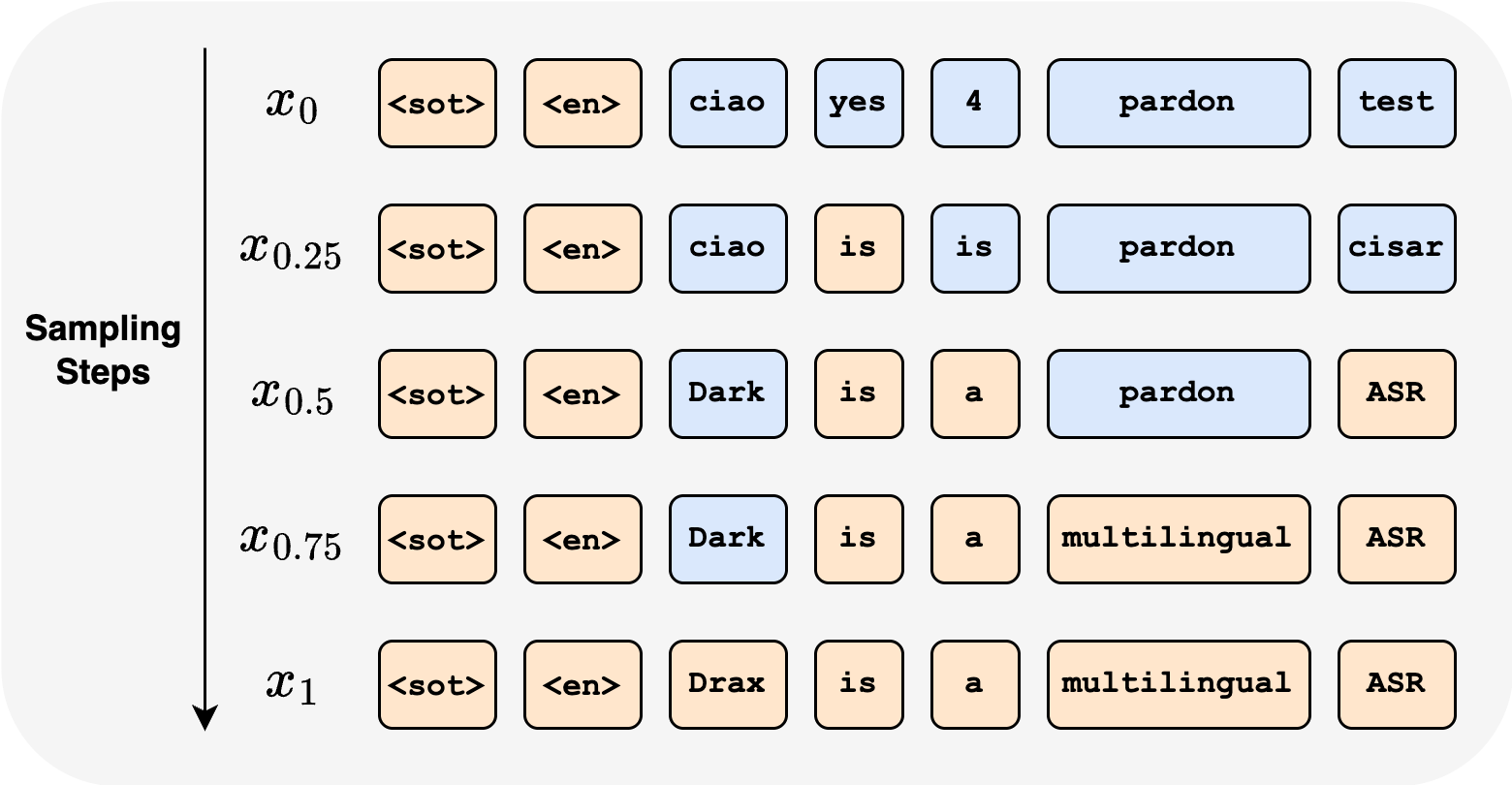}
        \caption{Sampling}
        \label{fig:sampling}
    \end{subfigure}
    \hfill
    \begin{subfigure}[b]{0.28\textwidth}
        \centering
        \includegraphics[width=\textwidth]{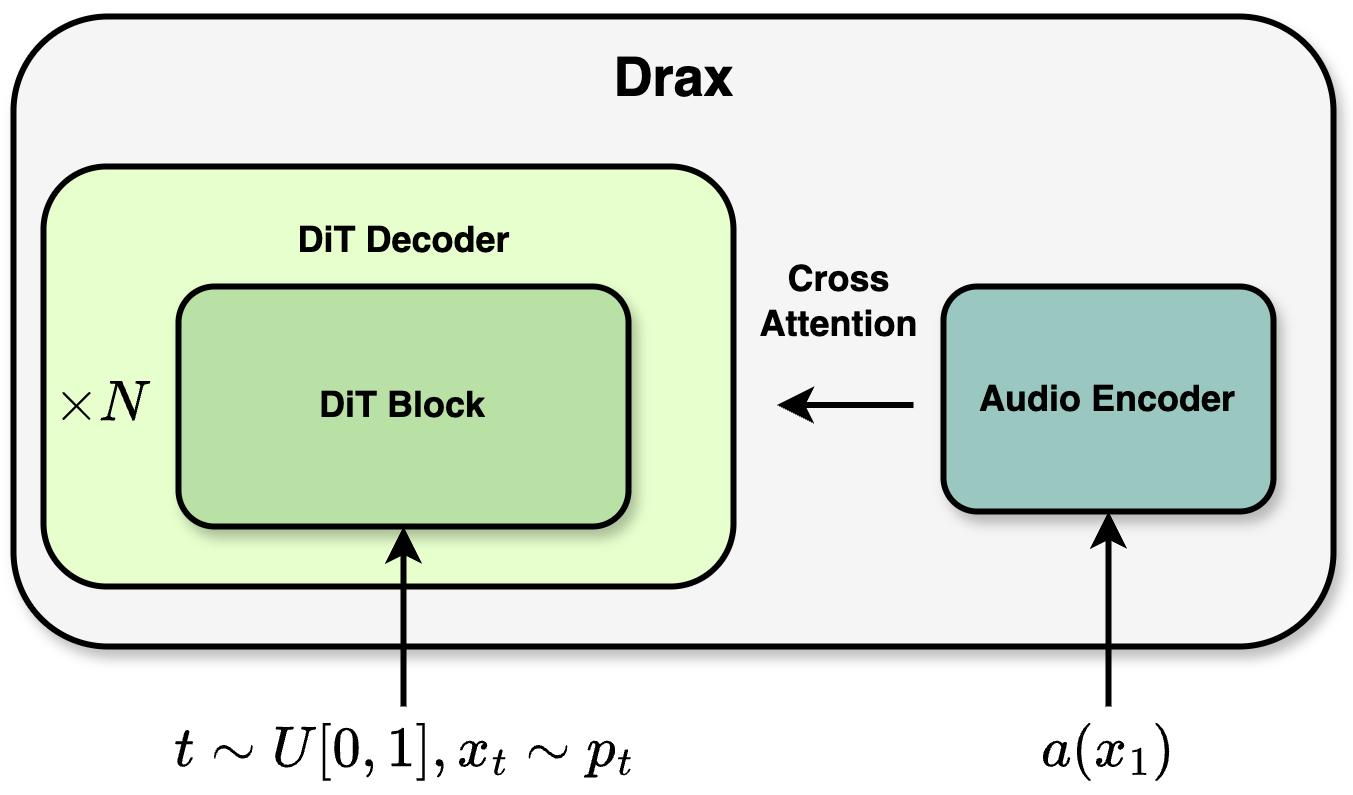}
        \caption{Architecture}
        \label{fig:arch}
    \end{subfigure}
    \caption{The \emph{\ourmethod{}} framework: (a) During training, our probability path involves a mixture of three components: a source uniform distribution, the target data distribution, and an audio conditioned distribution. (b) At inference, generation starts from noise tokens and iteratively follows the learned flow to the target sequence, passing through plausible intermediate hypotheses. (c) \ourmethod{} combines an audio encoder with a DiT-based decoder.}
    \label{fig:main}
\end{figure}

\section{Related Work}

\paragraph{ASR models.}
ASR models are typically trained under two approaches. The earlier approach, CTC \citep{graves2006ctc,graves2014towards}, aligns input frames to output sequences without frame-level labels and has been used in wav2vec2.0 \citep{baevski2020wav2vec}, MMS \citep{pratap2024scaling}, and HuBERT \citep{hsu2021hubert}. Despite its success, CTC has notable drawbacks: CTC assumes conditional independence between tokens and struggles with long-range dependencies, motivating autoregressive approaches. Recent ASR foundation models, such as Whisper \citep{radford2023robust}, generate tokens sequentially via a Transformer-based encoder-decoder. Audio-LLM models extend this idea by integrating speech encoders with large language models: Qwen-Audio \citep{chu2024qwen2} and Canary-Qwen \citep{canary_qwen_2.5B} use Conformer or Whisper encoders with Qwen LLM, while Phi-4-Multimodal \citep{abouelenin2025phi} and Gemma \citep{gemma_3n_2025} adopt Conformer encoders. These models map audio embeddings into the LM token space. In contrast, models such as AudioPaLM \citep{rubenstein2023audiopalm} and SpeechGPT \citep{zhang2023speechgpt} extend the LM tokens space by introducing dedicated audio tokens. Despite these innovations, inference remains slow due to token-by-token AR decoding.

\paragraph{NAR generative models.}

Recently, diffusion and flow matching generative models \citep{ho2020denoising, song2020score, nichol2021improved, lipman2022flow,tong2023conditional} have emerged as NAR alternatives to traditional generative approaches, enabling high-quality samples and more stable sampling. The success of these models has inspired extensions to discrete sequences. Discrete diffusion \citep{austin2021structured,li2022diffusion} and multinomial diffusion \citep{hoogeboom2021argmax} adapt continuous corruption processes to categorical data. Discrete flow models \citep{campbell2024generative, gat2024discrete} generalize diffusion by defining probability paths over discrete state-spaces via continuous-time Markov chains. While \cite{campbell2024generative} focus on multimodal tasks such as protein co-design, \cite{gat2024discrete} apply the approach to token distributions, improving text generation. Building on this, \cite{shaul2024flow} proposes a general discrete flow matching framework using a kinetic-optimal perspective, enhancing generation quality. Discrete diffusion has also been applied in the speech domain: DCTTS \citep{wu2024dctts} uses a discrete latent space with contrastive learning to align text and speech, while DiffS2UT \citep{zhu2023diffs2ut} performs reverse diffusion on discrete speech units for speech to speech translation.

\paragraph{Generative NAR ASR.} Research on generative NAR models for ASR remains very limited. To the best of our knowledge, only Transfusion \citep{baas2022transfusion}, FFDM \citep{yeh2024cross} and the concurrent Whisfusion \citep{kwon2025whisfusion}, explore this approach. Both Transfusion and FFDM enable parallel decoding through a multinomial diffusion framework, while Whisfusion combines a Whisper encoder with a diffusion-based decoder, reducing the latency typical of autoregressive models. However, these approaches primarily target English speech and lack evaluation on large-scale or multilingual benchmarks, underscoring the need for further research to assess their generalizability.

\section{Method}\label{sec:method}

In this section, we present \ourmethod{}, our approach for speech recognition with discrete flow matching. 
Our goal is to generate a text sequence conditioned on an input audio signal by learning a flow on the space of token sequences. 
We begin by reviewing the preliminaries of DFM and its formulation as a probability path with an associated velocity field. 
We then describe our extension, which introduces an audio-conditioned middle distribution to address the mismatch between training and inference, followed by details of the model architecture, training objective, and sampling procedure.

\subsection{Preliminaries}\label{sec:preliminaries}

Let $\mathcal{V}$ denote the vocabulary of tokens of size $d=|\mathcal{V}|$. We denote a sequence of tokens of size $L$ by $x=(x^1,\dots, x^L)\in\mathcal{V}^L$. In Discrete Flow Matching \citep{gat2024discrete}, our goal is to learn a generative model mapping a source distribution $p(x_0)$ to a target (data) distribution $q(x_1)$. Let $p_t, t\in[0,1]$ denote a time-dependent probability mass function (PMF) over $\mathcal{V}^L$, which takes the form
\begin{equation}\label{eq:prob_path}
    p_t(x)=\sum_{x_0,x_1\in \mathcal{V}^L}p_t(x{\mid} x_0,x_1)\pi(x_0,x_1), \quad p_t(x{\mid} x_0,x_1)=\prod_{i=1}^L p_t(x^i{\mid} x_0,x_1),
\end{equation}
where $p_t(x^i{\mid} x_0,x_1)$ is a time-dependent conditional probability path on $\mathcal{V}$ which satisfies $p_0(x^i{\mid} x_0,x_1)=\delta_{x_0^i}(x^i)$ and $p_1(x^i{\mid} x_0,x_1)=\delta_{x_1^i}(x^i)$. 
Here $(x_0,x_1)\sim \pi$ where $\pi$ is the coupling between source and target. We use unconditional coupling $\pi(x_0,x_1)=p(x_0)q(x_1)$.
In this work, we consider the common family of mixture conditional probability paths~\citep{gat2024discrete,shi2024simplified,sahoo2024simple}, which are given as a convex sum of $m$ conditional probabilities, $w_j$, 
\begin{equation}
    p_t(x^i\mid x_0,x_1)=\sum_{j=1}^m \kappa_j(t) w_j(x^i\mid x_0,x_1),
\end{equation}
with $\sum_{j=1}^m\kappa_j=1,$ and $\kappa_j\geq 0$, referred to as scheduler. Common choices are the masked and uniform sources with $m=2$~\citep{shi2024simplified,sahoo2024simple,gat2024discrete}. Following \cite{campbell2024generative,gat2024discrete}, we consider a Continuous-Time discrete Markov Chain (CTMC) with $(X_t)_{t\in[0,1]}\in\mathcal{V}^L$, such that $X_t\sim p_t$. 
A \emph{probability velocity}, $u_t$, is said to generate the probability path $p_t$ if, for all $t\in[0,1)$, 
\begin{equation}\label{eq:gen_velocity}
    X_{t+h}^i\sim\delta_{X_t^i}(\cdot)+h\cdot u_t^i(\cdot,X_t)+o(h).
\end{equation}
\cite{campbell2024generative,gat2024discrete} show that a generating velocity for $p_t$ can be constructed by considering only the conditional probability paths in Eq.~\ref{eq:prob_path}. Specifically, given conditional velocities $u_t^i(x^i, z^i\mid x_0, x_1)$ that generate the conditional paths $p_t(x^i\mid x_0, x_1)$, the marginal probability $u_t$ which generates $p_t$ is given by,
\begin{equation}
    u_t^i(x^i,z)=\sum_{x_0,x_1\in \mathcal{V}^L}u_t^i(x^i, z^i\mid x_0,x_1)p_{1\mid t}^i(x_0,x_1\mid z),
\end{equation}
where $p_{1\mid t}$ is the posterior probability of $x_0, x_1$ conditioned on the current state $z$. Frequently, training is done using the cross-entropy loss~\citep{gat2024discrete,campbell2024generative,shaul2024flow}, 
\begin{equation}\label{eq:ce}
    \mathcal{L}_{\mathrm{CDFM}}(\theta)=- \mathbb{E}_{t,(x_0,x_1),x_t} \sum_{i=1}^L\log p_{1\mid t}^{i,\theta}(x_1^i\mid x_t).
\end{equation}

\subsection{Speech Recognition with DFM}

We consider the problem of generating a text sequence (tokens), $x_1(a)$, conditioned on input audio $a$. To construct our path, we first consider the simple mixture with $p_0$ a uniform distribution, where the conditional probability path is given by
\begin{equation}
    p_t(x^i\mid x_0,x_1)=\kappa_0(t) \delta_{x_0}(x^i)+\kappa_1(t)\delta_{x_1}(x^i).
\end{equation}

While such a two-component path is simple and effective, it suffers from a fundamental limitation in the ASR setting. 
During training, the model only observes transitions between pure noise and the ground-truth sequence. 
At inference time, however, the generative process will traverse states corresponding to \emph{acoustically plausible but imperfect} token sequences. 
These states may differ from the ground-truth by substitutions, insertions, or deletions that are consistent with the input audio~\citep{havasi2025edit}. This discrepancy between the training dynamics and the inference dynamics is analogous to  teacher forcing in autoregressive models, where the model is exposed only to the true prefixes during training, but must rely on its own generated history during decoding~\citep{Williams1989ALA,bengio2015scheduled,Williams1989ALA}. 


\paragraph{Audio-Bridged Probability Path.} To mitigate this domain gap, we introduce a \emph{middle distribution} $p_{\mathrm{mid}}(\cdot \mid a)$ that is conditioned on the acoustic signal.
This distribution reflects sequences that are acoustically probable but may still deviate from the correct transcription, thereby bridging the gap between the uniform source distribution and the sharp ground-truth target distribution. 
Formally, we define a three-way mixture path of the form
\begin{equation}\label{eq:tri_mixture}
    p_t(x^i \mid x_0, x_1, a) = \kappa_0(t)\delta_{x_0}(x^i) + \kappa_{\mathrm{mid}}(t) p_{\mathrm{mid}}(x^i \mid a) + \kappa_1(t)\delta_{x_1}(x^i),
\end{equation}
where $(\kappa_j)_{j\in[0,1,\mathrm{mid}]}$ are smooth mixing schedules that control the interpolation between the source, middle, and target distributions, with $\kappa_0(0)=1,\kappa_1(1)=1$ and $\sum \kappa_j(t)=1, \forall t\in[0,1]$.
In practice, we set $\kappa_{\mathrm{mid}}(t)$ to concentrate its transition around $t=0.5$, ensuring that the middle distribution dominates near the midpoint of the trajectory, see Appendix~\ref{app:scheduler}.
By explicitly incorporating $p_{\mathrm{mid}}$, the model is encouraged to learn trajectories that are consistent with both the acoustic signal and plausible intermediate hypotheses (see Figure~\ref{fig:training}).


\paragraph{Model Architecture.} We adopt an encoder-decoder architecture, the dominant paradigm of modern speech recognition models~\citep{radford2023robust,chu2024qwen2}.
As our audio encoder, we use a pre-trained Whisper encoder~\citep{radford2023robust}, $E(a)=\varphi_a$. Our decoder uses the DiT transformer architecture~\citep{peebles2023scalable}, parametrized by $\theta$, with cross-attention layers to the audio representation at each transformer block, see Figure~\ref{fig:arch}. The middle distribution $p_{\mathrm{mid}}(\cdot \mid a)$ is parameterized by an auxiliary network $r_\psi$ that takes the encoder representation $\varphi_a$ as input and outputs per-token categorical distributions.

\paragraph{Training.} During training, we draw differentiable samples $x_t$ from $p_t$ using the Gumbel-Softmax reparameterization~\citep{maddison2016concrete,jang2016categorical}, which allows gradients to propagate into the middle distribution parameters $\psi$. We train the decoder parameters $\theta$ using the cross-entropy loss as in Eq.~\ref{eq:ce}. The middle distribution is trained jointly with the decoder using a combined loss, consists of the standard conditional DFM loss (Eq.~\ref{eq:ce}) and additional auxiliary cross-entropy loss directly on the middle distribution logits, 
\begin{equation}
    \mathcal{L}_{\mathrm{mid}}(\psi) 
    = - \mathbb{E}_{(a,x_1)} \sum_{i=1}^L \log p_{\mathrm{mid}}^{i,\psi}(x_1^i \mid a).
\end{equation}
The final objective is given by
\begin{equation}
    \mathcal{L}(\theta,\psi) = \mathcal{L}_{\mathrm{CDFM}}(\theta,\psi) + \mathcal{L}_{\mathrm{mid}}(\psi).
\end{equation}

\paragraph{Sampling.}  Sampling proceeds by integrating the marginal velocity field corresponding to the probability path $p_t$. This is done in parallel for each position of the current $X_t$,
\begin{equation}
    X_{t+h}^i\sim\delta_{X_t^i}(\cdot)+hu_t^i(\cdot,X_t).
\end{equation}
In our tri-mixture setup, the marginal velocity is given by,
\begin{equation}
    u_t^i(x^i,z)=\alpha_1(t)p_{1\mid t}^i(x^i\mid z) + \alpha_{\mathrm{mid}}(t)p_{\mathrm{mid}}(x^i)+\beta(t)\delta_{z}(x^i),
\end{equation}
where $\alpha, \beta$ depend on the scheduler $\kappa$, see Appendix~\ref{app:tri-velocity}.
However, in our construction, the middle distribution $p_{\mathrm{mid}}(\cdot \mid a)$ is introduced solely to enrich the training dynamics and expose the decoder to acoustically plausible intermediate states. 
Thus, at test time we are interested in generating directly from the model without relying on the auxiliary component $p_{\mathrm{mid}}$. 
Concretely, we therefore set $\alpha_{\textrm{mid}}\equiv 0$ and sample using the same procedure as in the two-way mixture case~\citep{gat2024discrete}, see Figure~\ref{fig:sampling}. In practice, we use the efficient sampling procedure from \cite{shaul2024flow}. In Appendix~\ref{app:p_mid_results} we provide results for sampling with and without the $p_{\textrm{mid}}$ component.


\paragraph{Candidate scoring strategies.}\label{sec:scoring_strategies} During sampling, discrete flow matching defines a stochastic generative process.
This stochasticity can be exploited at inference time:
By sampling multiple candidate transcriptions in parallel and selecting the best one based on a score function, we can enhance both robustness and accuracy. We consider four approaches:
(i) First, a simple strategy is to draw several samples and return the most frequent transcription, effectively taking the \emph{mode} over the candidate set. (ii) Second, we consider \emph{minimum Bayes risk} (MBR) decoding. 
Here, the final prediction is chosen as the candidate with minimum expected word (or character) error rate w.r.t the sampled candidate set, $\mathcal{C}$. Formally, 
\begin{equation}
    \hat{x}_1=\arg\min_{x\in\mathcal{C}}\frac{1}{\mathcal{\lvert C\rvert}}\sum_{y\in \mathcal{C}} \text{WER}(x,y).
\end{equation}
This procedure follows the classical MBR principle~\citep{GOEL2000115,kumar-byrne-2004-minimum,shen-etal-2016-minimum}, but leverages the inherent stochasticity of DFM to approximate the posterior expectation through diverse samples.
(iii) Third, we  rescore candidates using an external model such as Whisper, selecting the sequence with the highest log likelihood under the model. Importantly, this can be done efficiently with a single forward pass in the decoder~\citep{udagawa2022effect,huang2024multilingual}. (iv) Finally, following \citet{shaul2024flow}, the model itself can provide a likelihood-based score by estimating the evidence lower bound (ELBO) along the sampled trajectory, which allows internal ranking without the need for an external scorer. 



\begin{figure}[t]
    \centering
    \begin{subfigure}[b]{0.48\linewidth}
        \centering
        \includegraphics[width=\linewidth]{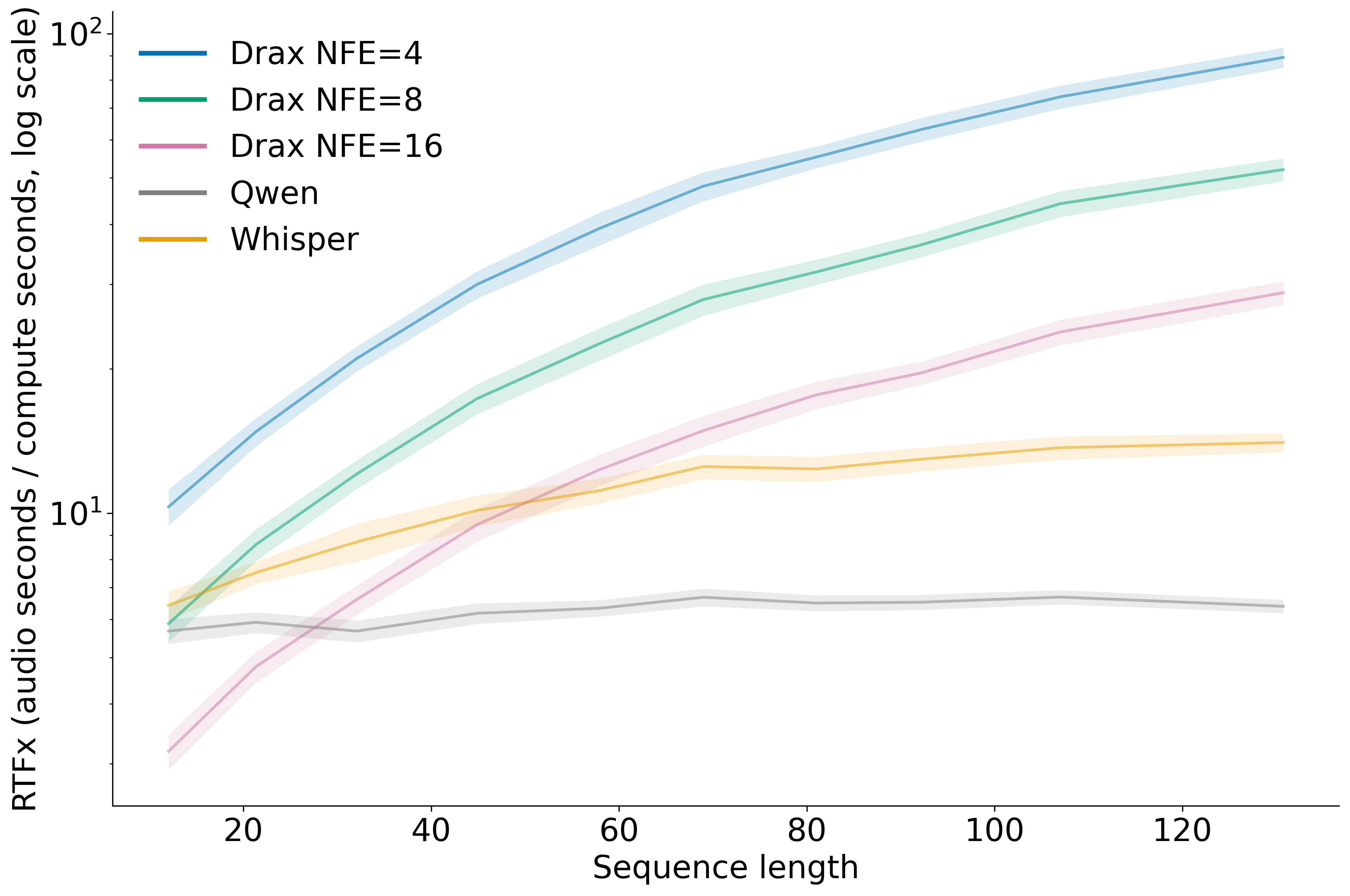}
        \caption{RTFx vs. Sequence Length.}
        \label{fig:runtime_rtfx}
    \end{subfigure}
    \hfill
    \begin{subfigure}[b]{0.48\linewidth}
        \centering
        \includegraphics[width=\linewidth]{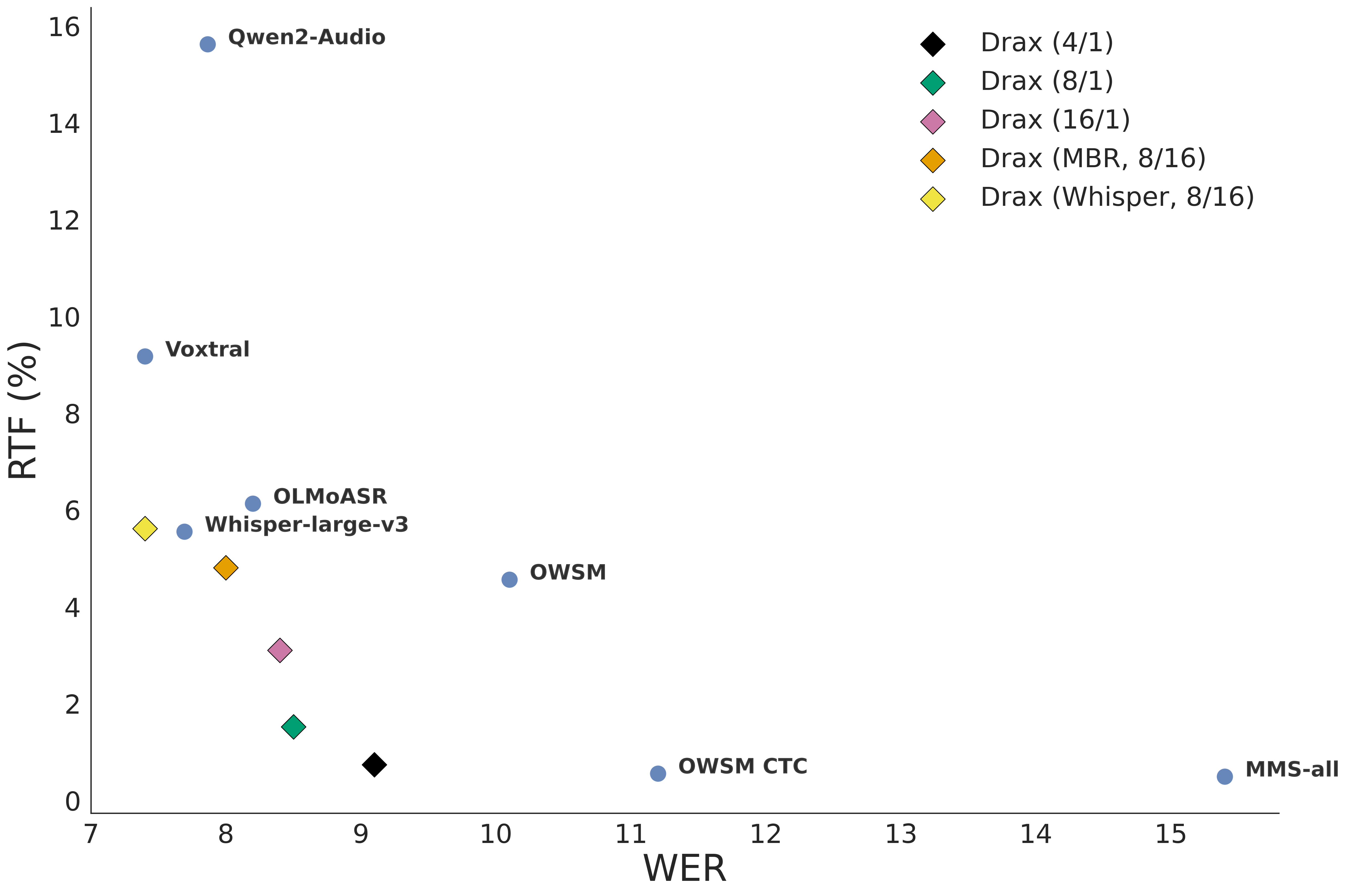}
        \caption{WER-RTF Pareto.}
        \label{fig:pareto_en}
    \end{subfigure}
    
    \caption{\emph{Accuracy-efficiency trade-off}: (a) The RTFx as a function of sequence length. (b) The Pareto front of the WER and RTF (\%) (i.e., 100/RTFx). The \ourmethod{} varients provide favorable accuracy-efficiency trade-off with better control over the trade-off point.}
    \label{fig:runtime_comparison}
\end{figure}

\section{Theoretical Analysis: Occupancy-Based Bound}\label{sec:theory}

In this section, we establish a generalization bound for DFM,
controlled by the discrepancy between training and inference occupancies, measured in total variation (TV).
The analysis highlights the role of the probability path design:
As our velocity field is not exact, our sampling trajectory will diverge from the desired path $p_t$. This can worsen the performance of our velocity vector due to covariant shift, thus compounding the error and further widening the gap between our path and the desired trajectory. By reducing this discrepancy, a well-chosen path, such as the audio-conditioned mixture path, can shrink the TV gap, tighten the bound, and improve generalization. Complete and extended formulations of the theoretical results, together with full proofs, are provided in the Appendix \ref{apndx:PROOFS}.


\paragraph{Notations.} 
Let $t\sim \lambda(t)=\Unif[0,1]$ and
denote $\mathcal S = \mathcal V^L$ the finite state space. For each $t$, $p_t$ and $q_t$ are probability distributions on $\mathcal S$. We define the \emph{occupancies}, using the marginal path distributions:
\begin{align*}
\mu_{\train}(t,x_t) := \lambda(t)\,p_t(x_t), \quad
\mu_{\gen}(t,x_t)   := \lambda(t)\,q_t(x_t),
\end{align*}
where, 
\[
p_t(x_t) = \sum_{x_0,x_1}\pi(x_0,x_1)\,p_t(x_t\mid x_0,x_1), \quad
q_t(x_t) = \sum_{x_0}p_0(x_0)\,q_t(x_t\mid x_0).
\]
The (target) probability velocity field is written as $u_t(x,z)$, which represents the instantaneous probability flow from $x$ to $z$ that governs 
the evolution of $p_t$. The learned model velocity is denoted $u_t^\theta(x,z)$, where $\theta$ are the model parameters that governs 
the evolution of $q_t$.

\begin{claim}[TV stability of path marginals]\label{lem:tv-stability}
Let $\mathcal S$ be a finite state space, and let 
$p_t,q_t \in \Delta(\mathcal S)$ evolve according to
\[
\dot p_t + \Div(p_t u_t) = 0,
\qquad
\dot q_t + \Div(q_t u_t^\theta) = 0,
\]

Assume $p_0 = q_0$, and define the velocity error
$\Delta_t := u_t^\theta - u_t$. Then, for every $s \in [0,1]$,
\[
\|q_s - p_s\|_{\TV}
\;\le\; \int_0^s \E_{x\sim q_t}\Big[ \sum_{z\neq x} |\Delta_t(x,z)| \Big]\mathrm{d}t,
\]
\end{claim}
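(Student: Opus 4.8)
The plan is to prove this as a standard Grönwall-free bound on the total variation distance between two solutions of continuity (master) equations driven by different velocity fields, starting from the same initial condition. Write $\delta_t := q_t - p_t \in \mathbb{R}^{\mathcal S}$ (a signed measure with $\sum_x \delta_t(x) = 0$). Since both $p_t$ and $q_t$ satisfy first-order linear ODEs in the finite-dimensional simplex, $\delta_t$ is differentiable in $t$ and we may compute $\dot\delta_t$ directly from the two continuity equations.

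The key algebraic step is to split the difference of the two divergence terms additively:
\[
\dot\delta_t = -\Div(q_t u_t^\theta) + \Div(p_t u_t) = -\Div\big(q_t(u_t^\theta - u_t)\big) - \Div\big((q_t - p_t) u_t\big) = -\Div(q_t \Delta_t) - \Div(\delta_t u_t).
\]
Here I use that $\Div$ (the discrete divergence, i.e. $(\Div(p u))(z) = \sum_{x} p(x) u_t(x,z) - p(z)\sum_{x} u_t(z,x)$, or whatever sign convention makes the master equation $\dot p_t + \Div(p_t u_t) = 0$ correct) is linear in its argument. The second term, $-\Div(\delta_t u_t)$, is the transport of the current error $\delta_t$ along the \emph{true} field $u_t$; the first term, $-\Div(q_t \Delta_t)$, is the fresh error being injected at time $t$ by the velocity mismatch.

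Now I estimate $\tfrac{d}{dt}\|\delta_t\|_{\TV} = \tfrac12 \tfrac{d}{dt}\|\delta_t\|_1$. For the transport-of-error term $-\Div(\delta_t u_t)$: a CTMC generator acting on a signed measure is a contraction in $\ell_1$ up to the usual argument — more precisely, since $u_t(x,\cdot)$ for $x \ne \cdot$ are nonnegative rates and the diagonal is chosen so rows sum appropriately, the flow $e^{\int u}$ is a stochastic (Markov) operator, hence an $\ell_1$-contraction on mean-zero signed measures, so this term contributes $\le 0$ to $\tfrac{d}{dt}\|\delta_t\|_1$ (this is the place where one invokes that $u_t$ is a genuine generating velocity, with nonnegative off-diagonal entries, so that the Kolmogorov forward equation preserves positivity and total mass). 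For the injection term $-\Div(q_t \Delta_t)$, crude triangle-inequality bounding gives $\|\Div(q_t\Delta_t)\|_1 \le 2\sum_x q_t(x)\sum_{z\ne x}|\Delta_t(x,z)| = 2\,\E_{x\sim q_t}\big[\sum_{z\ne x}|\Delta_t(x,z)|\big]$, where the factor $2$ cancels against the $\tfrac12$ in the definition of TV. Combining, $\tfrac{d}{dt}\|\delta_t\|_{\TV} \le \E_{x\sim q_t}\big[\sum_{z\ne x}|\Delta_t(x,z)|\big]$, and integrating from $0$ to $s$ with $\|\delta_0\|_{\TV} = 0$ (since $p_0 = q_0$) yields the claim.

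The main obstacle I anticipate is making the contraction argument for the transport term fully rigorous with the right sign conventions: one must verify that the discrete divergence operator associated with a legitimate probability velocity is the (negative) generator of a continuous-time Markov semigroup, so that the induced propagator is a stochastic matrix and therefore non-expansive in $\ell_1$. If one does not want to appeal to semigroup/contraction facts, the alternative is a direct computation: writing $\tfrac{d}{dt}\|\delta_t\|_1 = \sum_x \mathrm{sign}(\delta_t(x))\,\dot\delta_t(x)$ (valid a.e., off the measure-zero set of sign changes), expanding the transport term, and using $\mathrm{sign}(\delta_t(z))\big(\sum_x \delta_t(x)u_t(x,z)\big) \le \sum_x |\delta_t(x)| u_t(x,z)$ together with the row-sum identity for $u_t$ to telescope the positive contributions away — this is the Kantorovich–Rubinstein / contraction-of-Markov-chains computation in disguise and is the one slightly delicate estimate; everything else is bookkeeping. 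I would also note in passing that the bound is stated with the expectation under $q_t$ (the model's own occupancy), which is exactly what falls out of the injection term since the fresh error is weighted by $q_t$, not $p_t$; this is the feature that later lets the authors connect the bound to inference-time occupancy.
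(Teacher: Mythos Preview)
Your proposal is correct and uses the same decomposition $\dot\delta_t = -\Div(\delta_t u_t) - \Div(q_t\Delta_t)$ as the paper, together with the same two ingredients: Markov contraction to kill the transport term and a crude $\ell_1$ bound on the injection term. The paper, however, does not differentiate $\|\delta_t\|_1$ directly; instead it applies the variation-of-constants (Duhamel) formula to write $\delta_s = -\int_0^s S_{t\to s}\,\Div(q_t\Delta_t)\,dt$, where $S_{t\to s}$ is the inhomogeneous Markov propagator solving the homogeneous equation $\dot h = -\Div(h\,u_t)$, and then invokes TV-contractivity of $S_{t\to s}$ (citing Dynkin) on each integrand before taking the triangle inequality for the integral. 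This buys precisely what you flag as your main obstacle: the Duhamel route sidesteps the a.e.\ differentiability of $t\mapsto\|\delta_t\|_1$ and the sign-function bookkeeping entirely, at the cost of invoking evolution-operator machinery. Your direct differential argument is more elementary and yields the same bound; either is fine, and in fact your infinitesimal inequality $\tfrac{d}{dt}\|\delta_t\|_{\TV}\le \E_{x\sim q_t}\big[\sum_{z\ne x}|\Delta_t(x,z)|\big]$ is exactly what the paper states separately as Corollary~1, which it derives \emph{from} the integral bound rather than the other way around.
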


\begin{corollary}[Instantaneous TV growth]\label{cor:tv-growth}
For almost every $t\in[0,1]$: 
\[
\frac{d}{dt}\,\|q_t-p_t\|_{\TV}
\
\leq \E_{x\sim p_t}\!\Big[\sum_{z\neq x}|\Delta_t(x,z)|\Big]+
\left|\left|\sum_{z\neq x}|\Delta_t(x,z)|\right|\right|_\infty\cdot \|q_t-p_t\|_{\TV}
\]
\end{corollary}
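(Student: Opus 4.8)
The plan is to differentiate $\|q_t - p_t\|_{\TV} = \tfrac12\sum_{x}|q_t(x) - p_t(x)|$ directly and bound the result using the continuity equations, splitting the divergence of $q_t u_t^\theta$ into a piece driven by the velocity error $\Delta_t$ and a piece driven by the state discrepancy $q_t - p_t$. First I would write $r_t := q_t - p_t$ and note that, wherever the signs $\sigma_x(t) := \sign(r_t(x))$ are locally constant (which holds for a.e.\ $t$ since each $r_t(x)$ is absolutely continuous and the set of sign changes has measure zero), we have $\frac{d}{dt}\|r_t\|_{\TV} = \tfrac12\sum_x \sigma_x(t)\,\dot r_t(x)$. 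Subtracting the two continuity equations gives $\dot r_t = \dot q_t - \dot p_t = -\Div(q_t u_t^\theta) + \Div(p_t u_t)$, and I would rewrite this as $-\Div\big(q_t(u_t^\theta - u_t)\big) - \Div\big((q_t - p_t)u_t\big) = -\Div(q_t \Delta_t) - \Div(r_t u_t)$.

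Next I would substitute this decomposition into $\tfrac12\sum_x \sigma_x(t)\,\dot r_t(x)$, yielding two terms. For the first term, $-\tfrac12\sum_x \sigma_x(t)\,\Div(q_t \Delta_t)(x)$, I would expand the discrete divergence $\Div(q_t\Delta_t)(x) = \sum_{z\neq x}\big(q_t(x)\Delta_t(x,z) - q_t(z)\Delta_t(z,x)\big)$ (or the corresponding rate-matrix convention used in the paper's continuity equation), bound $|\sigma_x(t)| \le 1$, and collect terms by the source state to arrive at $\E_{x\sim q_t}\big[\sum_{z\neq x}|\Delta_t(x,z)|\big]$; here the first bound in Claim~\ref{lem:tv-stability} is essentially the integrated form of exactly this estimate, so I would mirror its argument. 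Replacing $q_t$ by $p_t$ in the final bound costs an extra $\|r_t\|_{\TV}$ term weighted by $\big\|\sum_{z\neq x}|\Delta_t(x,z)|\big\|_\infty$, via $|\E_{x\sim q_t}[f(x)] - \E_{x\sim p_t}[f(x)]| \le \|f\|_\infty \sum_x|r_t(x)| \le 2\|f\|_\infty\|r_t\|_{\TV}$ — I would need to check the factor of $2$ against the paper's TV normalization, absorbing it into the stated inequality. For the second term, $-\tfrac12\sum_x \sigma_x(t)\,\Div(r_t u_t)(x)$, I would use the standard fact that the generator of a CTMC is a contraction in $\ell_1$ (equivalently, $\Div(\cdot\, u_t)$ is dissipative against the sign pattern since $u_t$ has nonnegative off-diagonal rates summing appropriately), so this term is $\le 0$ and can simply be dropped.

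The main obstacle I anticipate is the careful bookkeeping of the discrete divergence/rate-matrix conventions: the paper writes $\dot p_t + \Div(p_t u_t) = 0$ with $u_t(x,z)$ the rate from $x$ to $z$, so I must be consistent about whether the "mass leaving $x$" term is $-q_t(x)\sum_{z\neq x}\Delta_t(x,z)$ or appears with the opposite sign, and ensure the cross terms $q_t(z)\Delta_t(z,x)$ are regrouped correctly so that the bound genuinely collapses to $\E_{x\sim q_t}[\sum_{z\neq x}|\Delta_t(x,z)|]$ rather than a larger double sum. A secondary subtlety is justifying the interchange of $\frac{d}{dt}$ with the finite sum and the a.e.\ differentiability of $\|r_t\|_{\TV}$; since $\mathcal S$ is finite this is routine (finite sum of absolutely continuous functions), but it should be stated. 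Once the sign structure is pinned down, the two-term split plus dropping the dissipative term plus the $\|f\|_\infty$ transfer from $q_t$ to $p_t$ gives the claimed inequality directly.
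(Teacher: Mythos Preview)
Your proposal is correct and takes a genuinely different route from the paper. The paper does not re-differentiate the TV norm: it simply quotes Claim~\ref{lem:tv-stability} to obtain the integrated bound $\|q_s-p_s\|_{\TV}\le\int_0^s \E_{x\sim q_t}[\sum_{z\neq x}|\Delta_t(x,z)|]\,dt$, asserts absolute continuity of both sides, and reads off the pointwise inequality $\frac{d}{dt}\|q_t-p_t\|_{\TV}\le \E_{x\sim q_t}[\cdots]$; the two-term form is then just adding and subtracting $\E_{x\sim p_t}[\cdots]$. You instead differentiate $\tfrac12\sum_x|r_t(x)|$ directly, split $\dot r_t=-\Div(q_t\Delta_t)-\Div(r_t u_t)$, bound the first piece by the same $\ell_1$ estimate that appears inside the proof of Claim~\ref{lem:tv-stability}, and drop the second piece using the infinitesimal $\ell_1$-dissipativity of the Markov generator (which is exactly the derivative-level version of the paper's semigroup contraction $\|S_{t\to s}\phi\|_{\TV}\le\|\phi\|_{\TV}$). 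Your path is more self-contained, avoiding Duhamel entirely, and it sidesteps a subtle gap in the paper's argument: $F(s)\le\int_0^s g$ for all $s$ with $F(0)=0$ does \emph{not} by itself imply $F'(t)\le g(t)$ a.e.; one needs the incremental version $F(s)-F(s')\le\int_{s'}^s g$, which your direct computation effectively supplies. On the factor-of-$2$ worry: since $f(x)=\sum_{z\neq x}|\Delta_t(x,z)|\ge 0$, centering at $\|f\|_\infty/2$ gives $\E_{q_t}f-\E_{p_t}f\le\|f\|_\infty\,\|q_t-p_t\|_{\TV}$ with no extra constant, matching the stated bound.
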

The first term represents the \emph{intrinsic model error}, measured under the target distribution $p_t$. The second term captures the additional contribution arising from the mismatch between $q_t$ and $p_t$, i.e., the \emph{domain gap}. Since the velocity field is trained on samples drawn from $p_t$ but applied at inference time to samples from $q_t$, even a moderate error $\Delta_t$ can be amplified by discrepancies between the two distributions, accelerating their divergence over time. We note, however, that this result provides only an upper bound: while it is consistent with our empirical observations, a deeper investigation is required to fully understand the connection.


We can further generalize the discrepancy between our sampling distribution and the target distribution beyond the TV distance in the following theorem: 

\begin{theorem}[Generalization bound via occupancy TV]\label{thm:da}
Assume the instantaneous loss is bounded, $0 \le \ell_\theta \le B$. Then
\begin{align}
\E_{\mu_{\gen}}[\ell_{\theta}]
&\;\le\; \E_{\mu_{\train}}[\ell_{\theta}]
   \;+\; B\,\big\|\mu_{\gen}-\mu_{\train}\big\|_{\TV} \label{eq:da-tv}\\[1mm]
&\;\le\; \E_{\mu_{\train}}[\ell_{\theta}]
   \;+\; B \int_0^1 (1-t)\,\E_{x\sim q_t}\Big[ \sum_{z\neq x} |\Delta_t(x,z)| \Big]\,\mathrm{d}t. 
   \label{eq:da-final}
\end{align}
\end{theorem}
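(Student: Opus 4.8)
The plan is to prove the two inequalities in sequence. The first inequality \eqref{eq:da-tv} is the standard change-of-measure bound: since $0 \le \ell_\theta \le B$ pointwise, for any two probability measures $\mu_\gen, \mu_\train$ on the same space $[0,1]\times\mathcal S$ we have $\E_{\mu_\gen}[\ell_\theta] - \E_{\mu_\train}[\ell_\theta] = \sum_{t,x_t}\ell_\theta(t,x_t)\,(\mu_\gen - \mu_\train)(t,x_t) \le B \sum_{t,x_t}(\mu_\gen - \mu_\train)_+(t,x_t) = B\|\mu_\gen - \mu_\train\|_{\TV}$, using the identity $\|\nu\|_{\TV} = \sum (\nu)_+$ for a signed measure of total mass zero. (If $\TV$ is normalized to $\tfrac12\sum|\cdot|$ one inserts the usual factor, but the paper's convention from Claim~\ref{lem:tv-stability} appears unnormalized, so I would keep it unnormalized for consistency.) I would state this as a one-line lemma and cite it.

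The second inequality \eqref{eq:da-final} is where the path structure enters. First I would unfold the occupancy TV distance using the product structure $\mu_\train(t,\cdot) = \lambda(t) p_t$, $\mu_\gen(t,\cdot) = \lambda(t) q_t$ with $\lambda = \Unif[0,1]$: since the time marginals agree,
\[
\|\mu_\gen - \mu_\train\|_{\TV} = \sum_{t,x_t}\big(\lambda(t) q_t(x_t) - \lambda(t) p_t(x_t)\big)_+ = \int_0^1 \lambda(t)\,\|q_t - p_t\|_{\TV}\,\mathrm dt = \int_0^1 \|q_t - p_t\|_{\TV}\,\mathrm dt,
\]
treating $t$ as continuous (replacing the inner sum over $x_t$ by the per-$t$ TV and the outer sum/measure over $t$ by an integral against $\lambda$). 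Then I would invoke Claim~\ref{lem:tv-stability}, which gives $\|q_t - p_t\|_{\TV} \le \int_0^t \E_{x\sim q_s}\big[\sum_{z\neq x}|\Delta_s(x,z)|\big]\,\mathrm ds$ for each $t$. Substituting and swapping the order of integration (Tonelli, everything nonnegative) converts $\int_0^1\!\int_0^t (\cdot)\,\mathrm ds\,\mathrm dt$ into $\int_0^1 (1-s)\,(\cdot)\,\mathrm ds$, since the inner variable $s$ ranges over $[0,1]$ and for fixed $s$ the outer variable $t$ ranges over $[s,1]$, contributing the weight $(1-s)$. Renaming $s\to t$ yields exactly the claimed bound $B\int_0^1 (1-t)\,\E_{x\sim q_t}\big[\sum_{z\neq x}|\Delta_t(x,z)|\big]\,\mathrm dt$.

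There is essentially no hard analytic obstacle here; the content is bookkeeping. The one point that deserves care is matching the $\TV$ normalization convention throughout (claim, corollary, theorem) so the constants line up, and being explicit that $\mu_\gen$ and $\mu_\train$ are genuine probability measures on the joint space $[0,1]\times\mathcal S$ (so that the change-of-measure step and the total-mass-zero identity apply), which follows since $\lambda$ is a probability density on $[0,1]$ and $p_t, q_t \in \Delta(\mathcal S)$ for every $t$. I would also note in passing that $\ell_\theta$ must be the same function of $(t,x_t)$ under both occupancies — which it is, being the model's instantaneous loss — so that the only discrepancy is in the measures. A remark that the Fubini/Tonelli swap and the continuous treatment of the time sum are justified because the integrand is nonnegative and bounded, hence integrable on the compact domain $[0,1]\times\mathcal S$ with $\mathcal S$ finite, would close the argument.
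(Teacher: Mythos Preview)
Your proposal is correct and follows essentially the same route as the paper: the first inequality is the standard bounded-loss TV duality bound, and the second is obtained by decomposing the occupancy TV as $\int_0^1 \|q_t-p_t\|_{\TV}\,\mathrm dt$, applying Claim~\ref{lem:tv-stability} at each $t$, and swapping the order of integration to produce the $(1-t)$ weight. The paper packages the second step as a separate Proposition (From path-marginal TV to occupancy TV) before invoking it in the theorem, but the underlying argument is the same as yours. One minor note: the paper does use the normalized convention $\|\cdot\|_{\TV}=\tfrac12\sum|\cdot|$ (visible in the proof of Claim~\ref{lem:tv-stability}), which is consistent with your positive-part identity $\|\nu\|_{\TV}=\sum(\nu)_+$ for mass-zero $\nu$, so your concern about matching conventions is already resolved.
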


\begin{table}
    \centering
    \begin{adjustbox}{max width=.95\textwidth}
    \begin{tabular}{cccccccccc}
    \toprule
    & LS Clean & LS Other & AMI & Earnings 22 & VoxPopuli & Tedlium  &  Average \\
     \cmidrule(lr){2-7} \cmidrule(lr){8-8}
    &  \multicolumn{7}{c}{WER$\downarrow$} & Params (B) & RTFx$\uparrow$
    \\
    \midrule
     Qwen2-Audio  & $\mathbf{ 1.7}$ & $\underline{4.0}$ & $15.2$ &  $15.1$  & $\mathbf{7.1}$ & $4.0$ & $7.8$ &  $8.4$ & $6.4$ \\
     Voxtral & \underline{$2.0$} & $4.3$ & $16.8$ & $\mathbf{10.6}$ & $\mathbf{7.1}$ & $\mathbf{3.6}$ & $\mathbf{7.4}$ & $4.7$ & $10.9$\\
     \toprule
     MMS-all & $3.8$ & $8.3$ & $36.4$ & $24.6$ & $9.6$  & $10.1$ & $15.4$ & $1.0$ & $\textbf{201.2}$ \\

     OWSM CTC & $3.3$ & $6.6$ & $25.5$ & $18.6$ & $8.4$ & $4.9$ & $11.2$ & $1.0$ & $\underline{178.3}$\\

     \midrule
     Whisper-large-v3   & \underline{$2.0$} & $\mathbf{3.9}$ & $16.2$ & \underline{$11.1$} & $8.8$ & $3.9$ & \underline{$7.6$} &  $1.5$ & $18.0$ \\
     OLMoASR-large.en-v2 & $2.7$ & $5.6$ & $16.8$ & $11.9$ & $8.0$ & $4.2$ & $8.2$ &  $1.5$ & $16.3$\\
     OWSM & $2.4$ & $5.3$ & $23.9$ & $15.8$ & $8.3$ & $5.0$ & $10.1$ & $1.0$ & $21.9$ \\
     
     
     \midrule
      TransFusion & $6.7$ & $8.8$ & $-$ & $-$ & $-$ & $-$ & $-$ &  $0.2$ & $-$ \\
      Whisfusion & $8.3$ & $17.0$ & $-$ & $-$ & $-$ & $-$ & $-$ & $0.3$ & $-$\\
    FDDM & $4.0$ & $7.2$ & $-$ & $-$ & $-$ & $-$ & $-$ & $0.6$ & $-$\\
      \midrule
      \ourmethod{}  &  $2.6$ & $5.7$ & $13.9$ & $15.2$ & $8.6$ & $4.8$ & $8.4$ &  $1.2$ & $32.2$ \\
      \ourmethod{} (MBR, 8/16)  &  $2.6$ & $5.3$ & \underline{$13.6$} & $14.6$ & $8.0$ & $4.1$ & $8.0$ &  $1.2$ & $20.8$\\
      \ourmethod{} (Whisper, 8/16)  &  $2.2$ & $4.7$ & $\mathbf{12.7}$ & $13.7$ & \underline{$7.4$ }& \underline{$3.7$} & $\mathbf{7.4}$ &  $2.1$ & $17.8$ \\
    \bottomrule
    \end{tabular}
    \end{adjustbox}
    \caption{English results using datasets from the Hugging Face Open ASR benchmark~\citep{open-asr-leaderboard}. \ourmethod{} provide control over the accuracy-efficiency trade-off, and achieve on par-results to the best performing methods.}
    \label{tab:en_bench}
\end{table}

\section{Experiments}\label{sec:experiments}

\paragraph{Baselines.} We evaluate \ourmethod{} against several recent methods. \emph{Whisper}~\citep{radford2023robust} (\texttt{large-v3}) is an encoder-decoder multilingual model for speech recognition and translation, trained on 5M hours of weakly supervised speech-text pairs. \emph{Qwen2-Audio}~\citep{chu2024qwen2} extends the Qwen2 language model~\citep{yang2024qwen2} with an audio encoder.
\emph{Voxtral}~\citep{liu2025voxtral} is built on the Mistral LLM backbone and incorporates a dedicated speech encoder.
\emph{OLMoASR}~\citep{ngo2025olmoasr} (\texttt{large.en-v2}) is trained on up to three million curated hours of English-only speech data.
\emph{OWSM}~\citep{peng2024owsm} is a fully open multilingual speech model, with an AR and CTC based varients. \emph{MMS}~\citep{pratap2024scaling} is a large-scale multilingual speech model trained with a CTC objective, supporting speech recognition in over 100 languages. Finally, \emph{TransFusion}~\citep{baas2022transfusion} and \emph{FDDM}~\citep{yeh2024cross}
uses multinomial diffusion, 
while \emph{Whisfusion}~\citep{kwon2025whisfusion} employs a diffusion-based Transformer for parallel ASR decoding. For \ourmethod{}, we denote e.g., \ourmethod{}(MBR, 8/16) with the convention (scoring method, NFE/Candidate set size). If not otherwise stated, we use 16 NFE and a generate a single transcription, i.e. \ourmethod{} (16/1). Together, these baselines cover large-scale encoder-decoder models, LLM-augmented ASR systems, and CTC and diffusion based models.

\paragraph{Evaluation Metrics.} 
We report performance using both recognition accuracy and computational efficiency. 
Accuracy is measured using \emph{word error rate} (WER) for languages with explicit word segmentation, and \emph{character error rate} (CER) for languages such as Chinese or Japanese, where text is naturally represented at the character level.
For efficiency, we measure runtime using the \emph{real-time factor} (RTF), defined as the ratio between compute time and input audio duration. We also report its reciprocal RTFx $=$ (audio seconds) $/$ (compute seconds). 
A value $\text{RTFx} > 1$ indicates faster-than-real-time decoding. 
 This metric allows for a direct comparison of accuracy-efficiency trade-offs across different model families.

\paragraph{Training details.} We train two variants of our method, on top of the Whisper (\texttt{large-v3}) encoder: \ourmethod{}, which is composed of $16$ decoder blocks with $20$ attention heads and $1280$ hidden dimension, and \ourmethod{}-flash, a smaller variant with a similar configuration but only $4$ decoder layers. The DiT~\citep{peebles2023scalable} decoders of \ourmethod{} and \ourmethod{}-flash consists of $580$M and $250$M parameters, respectively. We adapt the same tokenizer as in~\cite{radford2023robust}. The audio conditioned distribution $p_{\textrm{mid}}$ contains a single transformer block together with a projection layer, to output the per-position logits over the vocabulary, with a total of $28$M parameters. The models are trained with a mix of 8 languages (English, Spanish, German, French, Italian, Portuguese, Chinese, and Japanese) with a total of 15K hours.
See Appendix~\ref{app:train_details} for more details.

\begin{table}
    \centering
    \begin{adjustbox}{max width=\textwidth}
    \begin{tabular}{ccccccccccccccccccccc}
    \toprule
    & \multicolumn{14}{c}{WER$\downarrow$} & \multicolumn{4}{c}{CER$\downarrow$} \\
    \cmidrule(lr){2-15} \cmidrule(lr){16-19}
    & \multicolumn{3}{c}{DE} & \multicolumn{3}{c}{ES} & \multicolumn{3}{c}{FR} & \multicolumn{3}{c}{IT} & \multicolumn{2}{c}{PT} & \multicolumn{2}{c}{JA} & \multicolumn{2}{c}{ZH} \\
    \cmidrule(lr){2-4}\cmidrule(lr){5-7}\cmidrule(lr){8-10}\cmidrule(lr){11-13}\cmidrule(lr){14-15}\cmidrule(lr){16-17}\cmidrule(lr){18-19}
    & MLS & CV & Vox. & MLS & CV & Vox. & MLS & CV & Vox. & MLS & CV & Vox. & MLS & CV & CV & Reazon & CV & AISHELL \\
    \midrule
    Qwen2-Audio &  $8.1$ & $7.5$ & $12.5$ & $5.4$ & $5.7$ & $9.6$& $5.7$ & \underline{$9.5$} & $15.2$ & $12.5$ & $6.7$ & $19.2$& $11.6$ & $9.1$ & $15.2$& $50.7$ & $\mathbf{7.0}$& $\mathbf{1.5}$\\
     Voxtral  & $7.6$ & \underline{$6.2$} & $\mathbf{11.0}$ & $5.1$ & $\mathbf{4.7}$ & $\mathbf{8.6}$& \underline{$5.4$} & $\mathbf{8.9}$ & $14.7$ & $11.2$ & $\mathbf{5.5}$ & $16.7$& $\mathbf{6.6}$ & \underline{$6.2$} & $-$& $-$ & $-$& $-$ \\
     \midrule
     MMS-all & $8.6$ & $12.3$& $16.1$&  $5.7$ & $9.9$& $10.9$ & $8.7$ & $16.0$& $18.3$ & $11.0$ & $9.90$ & $19.8$ & $15.8$ & $11.7$ & $31.0$ & $49.2$ & $25.6$ & $31.2$ \\
     OWSM CTC & $11.8$ & $11.4$ & $16.4$ & $10.3$ & $11.6$ & $14.9$ & $12.9$ & $15.4$ & $21.1$ & $22.1$ & $15.2$ & $25.8$ & $23.5$ & $19.6$ & $\mathbf{12.2}$ & $\mathbf{10.4}$ & \underline{$13.2$} & \underline{$6.3$} \\
     \midrule
     Whisper-large-v3  & $\mathbf{5.5}$ & $\mathbf{6.0}$ & $13.1$ & $\mathbf{3.9}$ & \underline{$5.0$} & $10.5$ & $\mathbf{4.7}$ & $11.3$ & $15.0$& $\mathbf{9.2}$ & \underline{$5.8$} & $28.5$& \underline{$7.1$} & $\mathbf{5.7}$ & $\mathbf{12.2}$ & $19.1$ & $16.1$ & $8.8$ \\
     OWSM & $11.0$ & $10.2$ & $16.4$ & $9.0$ & $10.6$ & $15.5$& $12.1$ & $15.0$ & $21.3$& $20.2$ & $13.8$& $35.1$& $22.3$ & $20.3$ & $14.9$ & $14.6$ & $14.5$ & $6.4$\\
     \midrule
      \ourmethod{}  & $7.7$ & $9.1$ & $11.9$ & $5.4$ & $6.8$  & $10.6$ & $7.1$ & $12.0$ & $17.0$& $12.5$ & $8.5$  & $18.0$& $13.6$ & $11.5$ & $14.1$ & $13.4$ & $18.0$ & $7.8$ \\
      \ourmethod{} (MBR, 8/16) & $7.3$ & $8.5$ & $12.5$ &  $4.9$ & $6.2$ & $10.2$ & $6.4$ & $11.2$ & \underline{$11.8$} & $11.0$ & $7.7$ & \underline{$16.5$} & $11.9$ & $10.6$ & $13.2$ & $12.5$ & $16.5$ & $8.7$ \\
      \ourmethod{} (Whisper, 8/16) & \underline{$6.5$} & $7.2$ & \underline{$11.4$} &  \underline{$4.3$} & $5.3$ & $\underline{9.0}$ & $5.8$ & $10.1$ & $\mathbf{10.6}$ & \underline{$10.3$} & $6.5$ & $\mathbf{16.0}$ & $10.8$ & $8.6$ & \underline{$12.7$} & \underline{$12.2$} & $15.3$ & $6.7$ \\
    \bottomrule
    \end{tabular}
    \end{adjustbox}
    \caption{\emph{Multilingual evaluation}: Results on the MLS, CommonVoice-13, VoxPopuli, ReazonSpeech and AISHELL datasets.}
    \label{tab:ml_bench}
\end{table}

\subsection{Multilingual Speech Recognition}
We begin by evaluating the multilingual performance of \ourmethod{} against strong encoder–decoder and CTC baselines on a broad suite of public ASR benchmarks (see Appendix~\ref{app:datasets}). The suite spans multiple language families and scripts, mixes read and spontaneous speech, and covers diverse domains (conversational, technical, formal) under both clean and noisy acoustic conditions. Together, these benchmarks constitute a comprehensive test of robustness to domain shift and cross-lingual generalization. The results are presented in Tables \ref{tab:en_bench} and \ref{tab:ml_bench}. Bold indicates the best performance, underline indicates the second-best. Across a wide range of English and multilingual benchmarks, \ourmethod{} is on-par or surpasses strong ASR baselines. The model's performance is consistent across domains and languages highlighting its robustness. Variants that use MBR or Whisper-guided scoring provide additional gains with minimal impact on throughput. Overall, the results validate \ourmethod{} as a competitive and efficient approach to multilingual ASR. Extended results with different Drax variants are presented in Appendix~\ref{app:en_extended}.

\subsection{Accuracy-Efficiency Trade-off}

A key efficiency advantage of NAR approaches such as flow-matching and diffusion-based models over AR decoders is that sampling is inherently parallel across sequence positions~\citep{austin2021structured,li2022diffusion,gat2024discrete}. While AR models like Whisper and Qwen2-Audio must decode tokens sequentially, causing latency to scale with output length~\citep{radford2023robust,chu2024qwen2}, \ourmethod{} requires only a fixed number of function evaluations (NFE). 
This design not only reduces dependence on sequence length but also enables explicit control over the accuracy-efficiency trade-off: increasing NFE improves WER, while smaller NFE yields faster decoding. 
Figure~\ref{fig:runtime_rtfx} demonstrates the scaling advantage with respect to utterance length, and Figure~\ref{fig:pareto_en} illustrates the Pareto frontier of WER versus runtime measured by RTF (1/RTFx). 

Beyond NFE, accuracy can also be traded for efficiency through candidate generation and scoring. The DFM framework naturally supports sampling multiple candidates for a given audio input, and we evaluate several scoring strategies described in Section~\ref{sec:scoring_strategies}. 
Figures~\ref{fig:ens_ted} and \ref{fig:ens_vox} in the Appendix show the effect of candidate set size and temperature on WER. 
We observe that ELBO-based scoring~\citep{shaul2024flow} is unstable, while MBR consistently achieves strong results,  comparable with Whisper scoring. 
Whisper scoring itself is efficient since it requires only a single decoder forward pass with the candidate batch.
Smaller temperature values reduce diversity but improve average candidate quality. 
Table~\ref{tab:nfe_sampling_size} reports results under varying NFE and candidate set sizes, together with RTFx, where we select 8 NFE and 16 candidates as a good trade-off between accuracy and efficiency. 
Finally, across all settings we observe a notable gap between the best scoring strategy and the oracle (minimum candidates WER), highlighting future opportunities for improving candidate selection.

\begin{table}[t]
    \centering
    \begin{adjustbox}{max width=\textwidth}
    \begin{tabular}{cccccccccccccccc}
    \toprule
    & \multicolumn{2}{c}{DE} & \multicolumn{2}{c}{ES} & \multicolumn{2}{c}{FR} & \multicolumn{2}{c}{IT} & \multicolumn{2}{c}{PT} & \multicolumn{2}{c}{LS-clean} & \multicolumn{2}{c}{LS-other} \\
     \cmidrule(lr){2-3} \cmidrule(lr){4-5} \cmidrule(lr){6-7} \cmidrule(lr){8-9} \cmidrule(lr){10-11} \cmidrule(lr){12-13}
     \cmidrule(lr){14-15}
     & \#Matches & RTFx & \#Matches & RTFx & \#Matches & RTFx & \#Matches & RTFx & \#Matches & RTFx & \#Matches & RTFx & \#Matches & RTFx \\
         \midrule
         Whisper large-v3 & $-$ & $18.36$ & $-$ & $19.01$ & $-$ & $17.47$ & $-$ & $18.85$ & $-$ & $18.60$ & $-$ & $16.02$ & $-$ & $15.64$ \\
         \midrule
        Whisper-turbo (10) & \underline{$2.31$} & $15.90$ & \underline{$4.73$}  & $26.02$ & \underline{$2.83$} & $17.12$ & \underline{$3.00$} & $19.38$ & \underline{$3.66$} & $22.08$ & \underline{$5.07$} & \underline{$22.50$} & \underline{$4.41$} & \underline{$20.18$} \\
        Whisper-turbo (5) & $1.94$ & \underline{$18.80$} & $3.16$  & \underline{$26.12$} & $2.21$ & \underline{$19.32$} & $2.36$ & \underline{$21.76$} & $2.70$ & \underline{$23.61$} & $3.20$ & $21.67$ & $2.92$ & $19.96$\\
        \midrule
        \ourmethod{}-flash & $\mathbf{10.57}$ & $\mathbf{38.54}$ & $\mathbf{11.27}$ & $\mathbf{42.17}$ &  $\mathbf{8.25}$ & $\mathbf{31.88}$ & $\mathbf{7.03}$ & $\mathbf{31.04}$ & $\mathbf{5.01}$ & $\mathbf{25.15}$ & $\mathbf{6.81}$ & $\mathbf{24.59}$ & $\mathbf{5.15}$ & $\mathbf{20.39}$\\
         \bottomrule
    \end{tabular}
    \end{adjustbox}
    \caption{\emph{Speculative decoding}: results for using \ourmethod{}-flash for generating hypothesis for an AR target model (Whisper large-v3). \ourmethod{}-flash outperforms Whisper-turbo in both number of matched tokens and RTFx.}
    \label{tab:speculative}
\end{table}

\subsection{Speculative Decoding}

We evaluate \ourmethod{} under a speculative decoding scheme, where a fast generator proposes continuations that are verified by a stronger model to reduce wall‑clock latency without sacrificing accuracy~\citep{leviathan2023fast}. Importantly, a non‑autoregressive drafter avoids the per‑token dependency of autoregressive hypothesis generation, enabling parallel block proposals. These proposals cover multi‑token continuations in one (or a few) forward passes, substantially reducing runtime overhead without compromising prediction accuracy~\citep{chen2024cascade,wen2024speculative}. In our setup, the \ourmethod{} model serves as the draft model, and Whisper acts as the target model. During verification, a draft token is accepted only if it matches the top-1 prediction of the target model. 

We compare \ourmethod{} and Whisper-Turbo as draft models. Whisper-Turbo is run in two configurations that speculate 5 or 10 tokens per step. For \ourmethod{} we use $\text{NFE}=2$ and $\tau=0.01$. We evaluate the models on MLS, LS-clean, and LS-other and report per-language results. Specifically, we report both RTFx and the number of matched tokens; the latter represents the average number of token candidates that the target model approves per step. 
The results are presented in Table~\ref{tab:speculative}. \ourmethod{} yields substantial speedups over Whisper-Turbo when used as the draft model, especially on non-English languages. It achieves RTFx in the range of \({\sim} 20\text{x}\) to \({\sim}42\text{x}\) while also producing more matched tokens. These results highlight an important application of \ourmethod{} as a NAR ASR.




\subsection{Training Path Design}

\label{sec:accuracy-efficiency}
\begin{wrapfigure}[15]{r}{0.5\linewidth}  
  \vspace{-5pt}
  \centering
  \includegraphics[width=\linewidth]{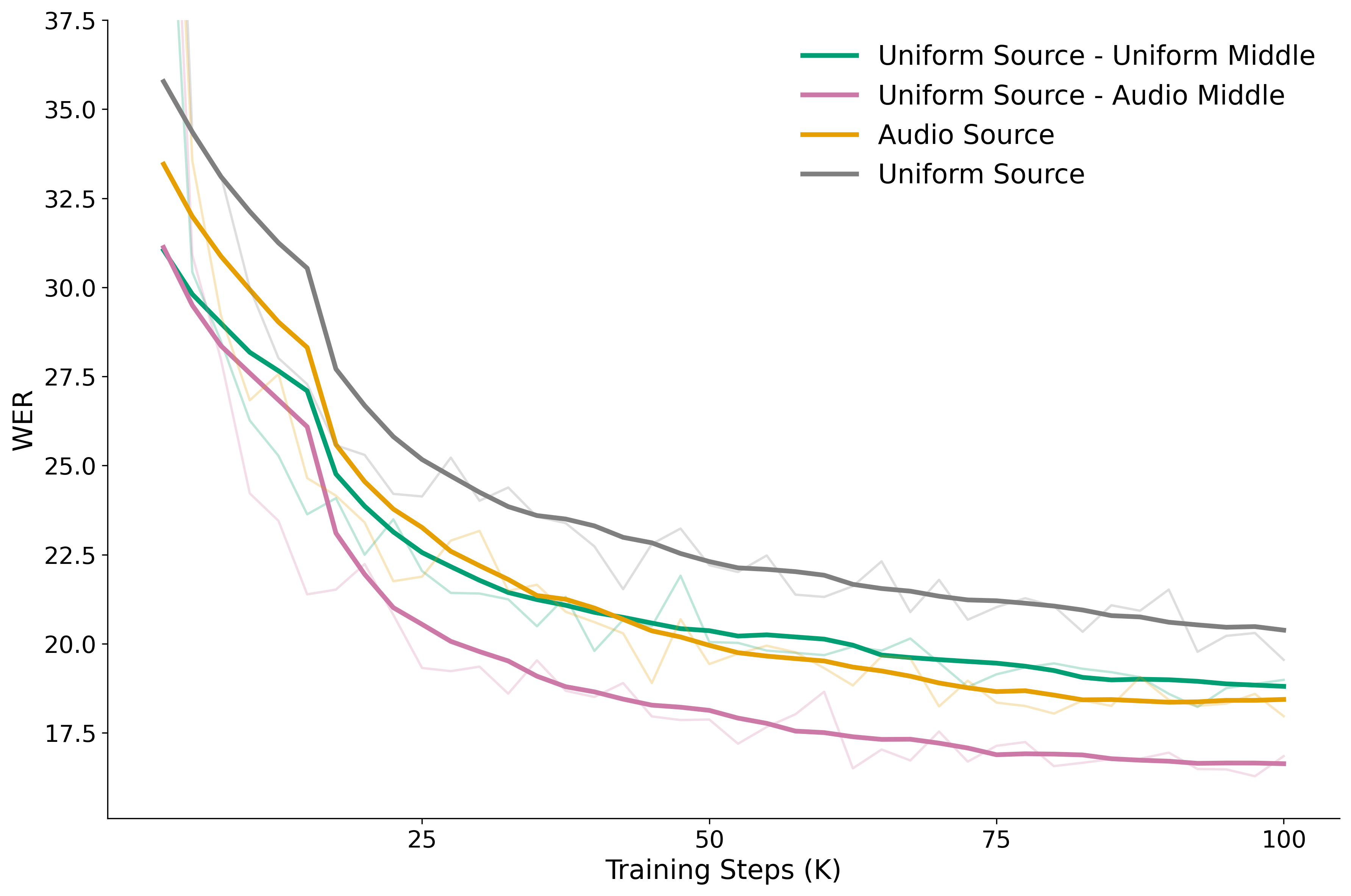}
  \caption{\emph{Training path design}. Comparison of training curves under different paths.}
  \label{fig:path_ablation}
\end{wrapfigure}
We conduct an experiment to study the effect of different probability paths on model generalization. 
We train a compact DiT decoder (12 layers, hidden size 768, 205M parameters) with a frozen Whisper-small encoder (88M parameters) under four path configurations: (i) a uniform source with uniform middle, (ii) a uniform source with audio-conditioned middle, (iii) an audio-conditioned source, and (iv) a uniform source baseline. 
All models are trained for 100K steps and evaluated with 8 NFEs. 
Figure~\ref{fig:path_ablation} reports the generalization word error rate over the training trajectory. 
We observe that the uniform source-audio middle path achieves the lowest WER throughout training, significantly outperforming both the uniform and audio-conditioned source only paths. 
Introducing a middle distribution consistently improves generalization compared to direct source-target paths, with the audio-conditioned middle providing the largest gains. 
These results validate our hypothesis that exposing the model to acoustically plausible intermediate states during training improves robustness and reduces errors.

\section{Discussion}

\paragraph{Limitations.} While \ourmethod{} demonstrates strong recognition accuracy and favorable efficiency trade-offs compared to AR and existing NAR ASR systems, several limitations remain. First, our experiments are conducted on a curated set of public multilingual datasets; scaling to much larger or more diverse training corpora may reveal additional challenges in robustness and generalization. In addition, although we show that introducing an intermediate distribution improves alignment between training and inference, the design of probability paths for ASR remains largely unexplored. Our choice of an audio-conditioned middle distribution is only one instantiation, and future work should investigate alternative or adaptive path constructions.

\paragraph{Conclusion.} In this work, we introduced \ourmethod{}, a discrete flow matching framework for NAR speech recognition that leverages a tri-mixture probability path with an audio-conditioned middle distribution. We provide theoretical analysis which to motivate our path design choice. 
Empirically, \ourmethod{} achieves competitive performance with state-of-the-art, large scale ASR models while offering improved runtime efficiency, and it integrates naturally with candidate scoring and speculative decoding strategies. These findings highlight discrete flow matching as a promising foundation for future non-autoregressive ASR research.

\bibliographystyle{plainnat}
\bibliography{ref}

\newpage
\appendix

\section{Proofs}
\label{apndx:PROOFS}

Here, we provide full theoretical derivation and missing proofs for Section~\ref{sec:theory}.

\paragraph{Setup and notation.}

Let $(x_0,x_1)\sim\pi$ be a coupling between source and data 
(e.g., $\pi(x_0,x_1)=p_0(x_0)\,p_{\text{data}}(x_1)$), 
and let $t\sim\Unif[0,1]$. 
The finite state space is $\mathcal S = \mathcal V^L$, where $\mathcal V$ is the 
vocabulary of tokens and $L$ the sequence length. 
For each $t$, $p_t$ and $q_t$ are probability distributions on $\mathcal S$. During training, $p_t(\cdot\mid x_0,x_1)$ denotes the designed conditional path; 
during generation, the model defines $q_t(\cdot\mid x_0)$.

We define the \emph{occupancies}, which can be measured either at the sequence
or site level, using the marginal path distributions:
\begin{align*}
\mu_{\train}(t,x_t) := \lambda(t)\,p_t(x_t), \quad
\mu_{\gen}(t,x_t)   := \lambda(t)\,q_t(x_t),
\end{align*}

where $\lambda=\Unif[0,1]$ is the base time measure,
and
\[
p_t(x_t) = \sum_{x_0,x_1}\pi(x_0,x_1)\,p_t(x_t\mid x_0,x_1),
\]
\[
q_t(x_t) = \sum_{x_0}p_0(x_0)\,q_t(x_t\mid x_0).
\]

On the finite state space $\mathcal S$, let $x,z \in \mathcal S$ denote states. 
The (target) probability velocity field is written as $u_t(x,z) \ge 0$ for $z\neq x$, which 
represents the instantaneous probability flow from $x$ to $z$ and governs 
the evolution of $p_t$. The learned model velocity is denoted $u_t^\theta(x,z)$, where $\theta$ are the model parameters that governs 
the evolution of $q_t$.

We define the \emph{velocity error} by
\[
\Delta_t(x,z) := u_t^\theta(x,z) - u_t(x,z),
\]


The discrete divergence operator acts on fluxes $v(x,z)$ between states, 
and for each state $x$ returns the imbalance between incoming and outgoing flow:
\[
\Div_x(v) \;=\; \sum_{z\in\mathcal{S}} \big[ v(z,x) - v(x,z) \big].
\]
In other words, $\Div_x(v)$ equals the total inflow into state $x$ 
minus the total outflow from $x$, expressing the conservation law that 
governs how probability mass moves across states.

The evolution of the distributions is governed by the continuity equations,
also known in the Markov chain literature as the master equations:
\[
\dot p_t + \Div\!\big(p_t u_t\big)=0,
\qquad
\dot q_t + \Div\!\big(q_t u_t^\theta\big)=0.
\]

Assume the instantaneous loss is bounded, $0 \le \ell_{\theta} \le B$, and denote the training and generation risks by
\[
R_{\train}(\theta) = \E_{\mu_{\train}}[\ell_{\theta}], \qquad
R_{\gen}(\theta)   = \E_{\mu_{\gen}}[\ell_{\theta}].
\]

\paragraph{Claim 1}[TV stability of path marginals]\label{lem:tv-stability}
Let $\mathcal S$ be a finite state space, and let 
$p_t,q_t \in \Delta(\mathcal S)$ evolve according to
\[
\dot p_t + \Div(p_t u_t) = 0,
\qquad
\dot q_t + \Div(q_t u_t^\theta) = 0,
\]

where $u_t,u_t^\theta : \mathcal S\times\mathcal S\to\mathbb{R}$ are velocity fields 
satisfying $u_t(x,z)\ge 0$, $u_t^\theta(x,z)\ge 0$ for all $z\neq x$, and 
\[
u_t(x,x)=-\!\sum_{z\neq x}u_t(x,z),\qquad 
u_t^\theta(x,x)=-\!\sum_{z\neq x}u_t^\theta(x,z).
\]

Assume $p_0 = q_0$, and define the velocity error
$\Delta_t := u_t^\theta - u_t$. Then, for every $s \in [0,1]$,
\[
\|q_s - p_s\|_{\TV}
\;\le\; \int_0^s \E_{x\sim q_t}\Big[ \sum_{z\neq x} |\Delta_t(x,z)| \Big]\,dt
\]

\begin{proof}
This proof follows the overall strategy of \cite{huangimproving}, who establish the result in the continuous setting. Here, we adapt and extend their argument to the discrete case for completeness.

Let $r_t := q_t - p_t$. Subtracting the continuity equations
\[
\dot q_t + \Div(q_t u_t^\theta) = 0,
\qquad
\dot p_t + \Div(p_t u_t) = 0,
\]
yields
\[
\dot r_t = -\Div(q_t u_t^\theta - p_t u_t) 
= -\Div(r_t u_t) - \Div(q_t (u_t^\theta - u_t)).
\]

With $\Delta_t := u_t^\theta - u_t$, this becomes
\[
\dot r_t = -\Div(r_t u_t) - \Div(q_t \Delta_t),
\]
with initial condition $r_0 = 0$ since $p_0 = q_0$.  
Note that $r_t$ is not a probability distribution but a signed measure 
satisfying $\sum_x r_t(x)=0$. 

\medskip
The homogeneous system 
\[
\dot h_t = -\Div(h_t u_t)
\]
induces a time-inhomogeneous Markov evolution operator $S_{t\to s}$, defined as the linear map that propagates a distribution $h_t$ at time $t$ 
to its state at time $s$: $h_s = S_{t\to s} h_t$ \citep{van2011markov}.  
By the variation-of-constants (Duhamel) \citep{bers1964partial} formula for linear ODEs with bounded generators, 
the solution of the inhomogeneous system is
\[
r_s = S_{0\to s} r_0 - \int_0^s S_{t\to s}\,\Div(q_t \Delta_t)\,dt.
\]
Since $r_0 = 0$, this simplifies to
\[
r_s = -\int_0^s S_{t\to s}\,\Div(q_t \Delta_t)\,dt.
\]





\medskip
According to Theorem 3.33 (Dynkin) in \cite{van2011markov}, the evolution operator $S_{t\to s}$ is contractive, hence:
\[
\|S_{t\to s}\phi\|_{\TV}\le \|\phi\|_{\TV}.
\]

Applying this with $\phi=\Div(q_t\Delta_t)$ inside the Duhamel representation,
\[
r_s = -\int_0^s S_{t\to s}\,\Div(q_t\Delta_t)\,dt,
\]
we obtain
\[
\|r_s\|_{\TV} 
= \Big\|\int_0^s S_{t\to s}\,\Div(q_t\Delta_t)\,dt\Big\|_{\TV}.
\]

First, by the triangle inequality for vector-valued integrals
(i.e., $\|\int_0^s X_t dt\| \le \int_0^s \|X_t\| dt$) 

\[
\Big\|\int_0^s S_{t\to s}\,\Div(q_t\Delta_t)\,dt\Big\|_{\TV}
\;\le\;\int_0^s \big\|S_{t\to s}\,\Div(q_t\Delta_t)\big\|_{\TV}\,dt.
\]
Second, by TV-contraction of the Markov evolution $S_{t\to s}$,
\[
\big\|S_{t\to s}\,\Div(q_t\Delta_t)\big\|_{\TV}\;\le\;\big\|\Div(q_t\Delta_t)\big\|_{\TV}.
\]

Combining the two displays gives
\begin{equation}\label{eq:first-ineq}
\|r_s\|_{\TV}\;\le\;\int_0^s \|\Div(q_t\Delta_t)\|_{\TV}\,dt.
\end{equation}

\medskip
For each $x\in\mathcal{S}$,
\[
(\Div(q_t\Delta_t))(x)=\sum_{z\neq x}\!\big(q_t(z)\Delta_t(z,x)-q_t(x)\Delta_t(x,z)\big).
\]
Hence:
\[
\|\Div(q_t\Delta_t)\|_{\TV}
=\tfrac12 \sum_{x\in\mathcal{S}} \Big|\sum_{z\neq x} (q_t(z)\Delta_t(z,x)-q_t(x)\Delta_t(x,z))\Big|
\;\le\; \sum_{x} q_t(x)\sum_{z\neq x} |\Delta_t(x,z)|.
\]

By dropping the inflow terms and upper bounding with the total outflow, we obtain
\[
\|\Div(q_t\Delta_t)\|_{\TV}
\;\le\; \sum_{x} q_t(x)\sum_{z\neq x}|\Delta_t(x,z)|.
\]

\medskip
Combining \eqref{eq:first-ineq} with this bound yields
\[
\|q_s - p_s\|_{\TV} = \|r_s\|_{\TV}
\;\le\; \int_0^s \E_{x\sim q_t}\Big[\sum_{z\neq x}|\Delta_t(x,z)|\Big]\,dt.
\]
In particular, for $s=1$,
\[
\|q_1 - p_1\|_{\TV} 
\;\le\; \int_0^1 \E_{x\sim q_t}\Big[\sum_{z\neq x}|\Delta_t(x,z)|\Big]\,dt.
\]
This completes the proof.
\end{proof}



\paragraph{Corollary 1}[Instantaneous TV growth]\label{cor:tv-growth}
For a.e $t\in[0,1]$,
\[
\frac{d}{dt}\,\|q_t-p_t\|_{\TV}
\;\le\;
\E_{x\sim q_t}\Big[\sum_{z\neq x}|\Delta_t(x,z)|\Big].
\]

which can be decomposed into two parts:  
\begin{equation}
\begin{split}
\frac{d}{dt}\,\|q_t-p_t\|_{\mathrm{TV}}
\le &\ \E_{x\sim p_t}\!\Big[\sum_{z\neq x}|\Delta_t(x,z)|\Big] 
+\Big(\E_{x\sim q_t}\!\Big[\sum_{z\neq x}|\Delta_t(x,z)|\Big]
-\E_{x\sim p_t}\!\Big[\sum_{z\neq x}|\Delta_t(x,z)|\Big]\Big) \\[0.3em]
\le &\ \E_{x\sim p_t}\!\Big[\sum_{z\neq x}|\Delta_t(x,z)|\Big]
+\left\|\sum_{z\neq x}|\Delta_t(x,z)|\right\|_\infty \cdot \|q_t-p_t\|_{\mathrm{TV}}.
\end{split}
\end{equation}

The first term reflects the \emph{intrinsic model error} under $p_t$, while the second term quantifies the extra contribution from the \emph{domain gap} between $q_t$ and $p_t$.

\begin{proof}
By Claim~\ref{lem:tv-stability}, for all $s\in[0,1]$,
\[
\|q_s-p_s\|_{\TV}
\;\le\;
\int_0^s \E_{x\sim q_t}\Big[\sum_{z\neq x}|\Delta_t(x,z)|\Big]\,dt.
\]
The right-hand side is absolutely continuous in $s$, hence so is
$s\mapsto\|q_s-p_s\|_{\TV}$. By the fundamental theorem of calculus for
absolutely continuous functions, the derivative exists for a.e.\ $t$ and
satisfies
\[
\frac{d}{dt}\,\|q_t-p_t\|_{\TV}
\;\le\;
\E_{x\sim q_t}\Big[\sum_{z\neq x}|\Delta_t(x,z)|\Big]
\].

\end{proof}





\begin{proposition}[From path-marginal TV to occupancy TV]\label{prop:occ-tv}
With
\begin{align*}
\mu_{\train}(t,x_t) := \lambda(t)\,p_t(x_t), \quad
\mu_{\gen}(t,x_t)   := \lambda(t)\,q_t(x_t),
\end{align*}
where $\lambda=\Unif[0,1]$, we have
\begin{align}
\big\|\mu_{\gen}-\mu_{\train}\big\|_{\TV}
&= \E_{t\sim\Unif[0,1]}\;
   \|q_t - p_t\|_{\TV}, \label{eq:tv-decomp}\\[2mm]
&\le \int_0^1 (1-t)\,\E_{x\sim q_t}\Big[ \sum_{z\neq x} |\Delta_t(x,z)| \Big]\,dt. \label{eq:occ-tv}
\end{align}
\end{proposition}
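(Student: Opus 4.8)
The plan is to establish the equality \eqref{eq:tv-decomp} by a direct disintegration (time-slicing) of the total-variation norm on the product space $[0,1]\times\mathcal S$, and then to obtain the inequality \eqref{eq:occ-tv} by inserting the bound from Claim~\ref{lem:tv-stability} and applying Tonelli's theorem.

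For \eqref{eq:tv-decomp}, I would expand, using the same $\tfrac12\,\ell_1$ normalization fixed earlier in the Appendix,
\[
\|\mu_{\gen}-\mu_{\train}\|_{\TV}=\tfrac12\int_0^1\sum_{x_t\in\mathcal S}\big|\mu_{\gen}(t,x_t)-\mu_{\train}(t,x_t)\big|\,dt.
\]
Substituting $\mu_{\gen}(t,x_t)=\lambda(t)q_t(x_t)$, $\mu_{\train}(t,x_t)=\lambda(t)p_t(x_t)$, and using $\lambda(t)\equiv 1$ on $[0,1]$, the factor $\lambda(t)$ leaves the inner sum, so the expression collapses to $\int_0^1\big(\tfrac12\sum_{x_t}|q_t(x_t)-p_t(x_t)|\big)\,dt=\int_0^1\|q_t-p_t\|_{\TV}\,dt=\E_{t\sim\Unif[0,1]}\|q_t-p_t\|_{\TV}$. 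Finiteness of $\mathcal S$ (together with continuity of $t\mapsto p_t,q_t$, which solve ODEs with bounded generators) makes both the interchange of the finite sum with the integral and the measurability of $t\mapsto\|q_t-p_t\|_{\TV}$ immediate.

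For \eqref{eq:occ-tv}, I would invoke Claim~\ref{lem:tv-stability}, which already gives, for every $s\in[0,1]$,
\[
\|q_s-p_s\|_{\TV}\;\le\;\int_0^s g(t)\,dt,\qquad g(t):=\E_{x\sim q_t}\Big[\sum_{z\neq x}|\Delta_t(x,z)|\Big]\ge 0 .
\]
Integrating over $s\in[0,1]$ and using the equality just established,
\[
\E_{t\sim\Unif[0,1]}\|q_t-p_t\|_{\TV}=\int_0^1\|q_s-p_s\|_{\TV}\,ds\;\le\;\int_0^1\!\!\int_0^s g(t)\,dt\,ds .
\]
Since $g\ge 0$ and is integrable ($\mathcal S$ finite, velocities bounded), Tonelli's theorem lets me swap the order of integration over the triangle $\{0\le t\le s\le 1\}$, yielding $\int_0^1 g(t)\big(\int_t^1 ds\big)\,dt=\int_0^1(1-t)\,g(t)\,dt$, the right-hand side of \eqref{eq:occ-tv}. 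There is essentially no hard step here: the argument is normalization bookkeeping plus one application of Tonelli. The only points that require a word of care are (i) keeping the TV normalization consistent between the joint occupancy and the per-time marginals so the $\tfrac12$ cancels cleanly, and (ii) justifying the Fubini/Tonelli interchange via nonnegativity of $g$. I would also remark that $\int_0^1(1-t)\,g(t)\,dt$ is exactly the time-discounted cumulative velocity error that then feeds into Theorem~\ref{thm:da}.
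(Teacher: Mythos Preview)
Your proposal is correct and follows essentially the same route as the paper. For \eqref{eq:tv-decomp} you use the direct $\tfrac12\ell_1$ expansion, whereas the paper writes the variational form $\sup_{\|f\|_\infty\le1}\int\sum f\,(q_t-p_t)$ and invokes a measurable-selector argument to push the supremum inside the time integral; your version is slightly more elementary but the content (time-slicing the TV norm) is identical, and for \eqref{eq:occ-tv} both arguments insert Claim~\ref{lem:tv-stability} and swap the double integral via Tonelli in the same way.
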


\begin{proof}

The difference of occupancies is
\[
(\mu_{\gen}-\mu_{\train})(t,x)=\lambda(t)\,\big(q_t(x)-p_t(x)\big).
\]

By the variational characterization of total variation on the product space $\mathcal S\times[0,1]$,
\[
\|\mu_{\gen}-\mu_{\train}\|_{\TV}
= \sup_{\|f\|_\infty\le 1}\int_0^1\sum_{x\in\mathcal S} f(x,t)\,\big(q_t(x)-p_t(x)\big)\,\lambda(t)\,dt.
\]
Since $\mathcal S$ is finite and $t\mapsto q_t(x)-p_t(x)$ is measurable for each $x$, 
the selector $g_t(x)=\operatorname{sign}(q_t(x)-p_t(x))$ is measurable in $t$, 
as the sign map is Borel-measurable \citep{castaing2006measurable}.

Hence $f(x,t)=g_t(x)$ is an admissible measurable test function on $\mathcal S\times[0,1]$ that
attains the inner supremum pointwise in $t$.
This justifies exchanging the supremum and the integral, yielding
\[
\|\mu_{\gen}-\mu_{\train}\|_{\TV}
= \int_0^1 \|q_t-p_t\|_{\TV}\,\lambda(t)\,dt.
\]

Equivalently,
\[
\|\mu_{\gen}-\mu_{\train}\|_{\TV}
= \mathbb E_{t\sim\Unif[0,1]}\big[\|q_t-p_t\|_{\TV}\big].
\]

\medskip
By Claim~\ref{lem:tv-stability} with $s=t$,
\[
\|q_t-p_t\|_{\TV}
\;\le\; \int_0^{t} \|\Delta_\tau\|_{\rowlone}\,d\tau ,
\]
Taking expectation over $t\sim\Unif[0,1]$ and setting
\(\psi(\tau):=\E_{x\sim q_\tau}\Big[ \sum_{z\neq x} |\Delta_\tau(x,z)| \Big]\)

\begin{align*}
\E_{t}\!\left[\|q_t-p_t\|_{\TV}\right]
&\le \E_{t}\!\left[\int_0^{t} \psi(\tau)\,d\tau\right] \\
&= \int_0^1\!\!\int_0^{t} \psi(\tau)\,d\tau\,dt \\
&= \int_0^1\!\!\int_{\tau}^{1} \psi(\tau)\,dt\,d\tau \\
&= \int_0^1 (1-\tau)\,\psi(\tau)\,d\tau \\
&= \int_0^1 (1-t)\,\E_{x\sim q_t}\Big[ \sum_{z\neq x} |\Delta_t(x,z)| \Big]\,dt.
\end{align*}
Combining this with \eqref{eq:tv-decomp}, we obtain

\[
\|\mu_{\gen}-\mu_{\train}\|_{\TV}
= \E_{t\sim\Unif[0,1]}\big[\|q_t-p_t\|_{\TV}\big]
\;\le\; \int_0^1 (1-t)\,\E_{x\sim q_t}\Big[ \sum_{z\neq x} |\Delta_t(x,z)| \Big]\,dt
\]
which is exactly inequality~\eqref{eq:occ-tv}.

\end{proof}


\paragraph{Theorem 1}[DA-style generalization bound via occupancy TV]\label{thm:da}
Assume the instantaneous loss is bounded, $0 \le \ell_\theta \le B$. Then
\begin{align}
R_{\gen}(\theta)
&\;\le\; R_{\train}(\theta)
   \;+\; B\,\big\|\mu_{\gen}-\mu_{\train}\big\|_{\TV}, \label{eq:da-tv}\\[1mm]
R_{\gen}(\theta)
&\;\le\; R_{\train}(\theta)
   \;+\; B \int_0^1 (1-t)\,\E_{x\sim q_t}\Big[ \sum_{z\neq x} |\Delta_t(x,z)| \Big]\,dt. 
   \label{eq:da-final}
\end{align}

\begin{proof}
By TV duality, for any bounded $f$ with $\|f\|_\infty\le B$ and any probability measures $P,Q$,
\[
\big|\E_P f - \E_Q f\big| \;\le\; B\,\|P-Q\|_{\TV}.
\]
Apply this with $f=\ell_\theta$, $P=\mu_{\gen}$, $Q=\mu_{\train}$ to obtain \eqref{eq:da-tv}.  
Then substitute the bound from Proposition~\ref{prop:occ-tv},
\[
\|\mu_{\gen}-\mu_{\train}\|_{\TV}
\;=\; \E_{t\sim\Unif[0,1]}\|q_t-p_t\|_{\TV}
\;\le\; \int_0^1 (1-t)\,\E_{x\sim q_t}\Big[ \sum_{z\neq x} |\Delta_t(x,z)| \Big]\,dt,
\]
which yields \eqref{eq:da-final}.
\end{proof}

\section{Tri-mixture Velocities}\label{app:tri-velocity}

For completeness, we provide the form of $u_t(\cdot)$ and $u_t(\cdot\mid x_0, x_1)$ in our tri-mixture path. Using Theorems 2 and 3 in \cite{gat2024discrete}, we have
\begin{equation}
    u_t^i(a,z)=\sum_j \alpha_t^{i,j}\hat{w}_i^j(a,z)+\beta_t^i\delta_{a,z_i},
\end{equation}
where $\hat{w}$ the posterior of $w$ is defined as,
\begin{equation}
\hat{w}_t^j(a,z)=\sum_{x_0,x_1} w^j(a\mid x_0,x_1)p_t(x_0,x_1\mid z).    
\end{equation}
Theorem 3 gives the coefficients as $\alpha_t^{i,j}=\dot\kappa_t^{i,j}-\kappa_t^{i,j}\dot\kappa_t^{i,\ell}/\kappa_t^{i,\ell}$, $\beta_t^i=\dot\kappa_t^{i,\ell}/\kappa_t^{i,\ell}$ with $\ell=\arg\min_j \dot\kappa_t^{i,j}/\kappa_t^{i,j}$. In our case $w^1(a\mid x_0,x_1)=\delta_{x_1^i}(a), w^\textrm{mid}(a\mid x_0, x_1)=p^i_\textrm{mid}(a), w^0(a\mid x_0,x_1)=\delta_{x^i_0}(a)$ and the marginal posterior $w_t^{i,1}(a,z)=p^i_{1\mid t}(a|z), w^{i,\textrm{mid}}_t(a,z)=p^i_\textrm{mid}(a), w_t^{i,0}(a,z)=p^i_{0\mid t}(a|z)$ since $p_\textrm{mid}$ is independent of the endpoints. Thus, we have,
\begin{equation}
    u_t^i(a,z)=
\alpha^1_t p_{1|t}^i(a\mid z)+\alpha^\mathrm{mid}_t p^i_\textrm{mid}(a)+\alpha_t^0 p_{0\mid t}^i(a)+\beta_t\delta_{z}(a),
\end{equation}
and the conditional probability velocity is given by,
\begin{equation}
u_t^i(a,z\mid x_0,x_1) =
\alpha_t^1\delta_{x_1^i}(a)+\alpha_t^\mathrm{mid}p^i_\textrm{mid}(a)+\alpha_t^0 \delta_{x_0^i}(a)+\beta_t\delta_{z}(a).
\end{equation}
Now, our scheduler construction, as provided in Appendix~\ref{app:scheduler}, ensures $\ell\equiv 0$, and so the terms $p_{0\mid t}^i(a)$ and $\delta_{x_0^i}(a)$ are dropped from $u_t$ and $u_t(\cdot \mid x_0,x_1)$, respectively.

\begin{figure}[t]
  \centering
  \includegraphics[width=0.55\linewidth]{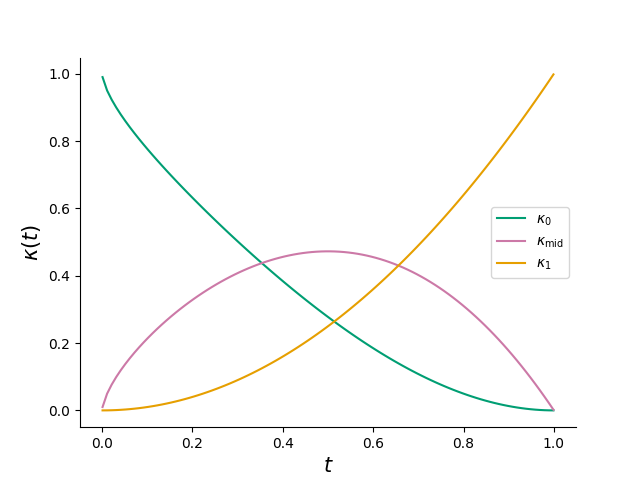}
  \caption{Tri-mixture sampling scheduler.}
  \label{fig:scheduler}
\end{figure}

\section{Experimental Details}

\subsection{Tri-Mixture Scheduler}
\label{app:scheduler}

Training with a three-way probability path requires mixing coefficients
$(\kappa_0(t), \kappa_{\mathrm{mid}}(t), \kappa_1(t))$ that interpolate between
the source, middle, and target distributions. We adopt a
factorized scheduler where the coefficients are defined as
\begin{align}
    \kappa_1(t) &= 1 - s(t), \\
    \kappa_{\mathrm{mid}}(t) &= r(t) \, s(t), \\
    \kappa_0(t) &= \big(1 - r(t)\big) s(t),
\end{align}
with $s: [0,1]\to[0,1]$ strictly decreasing and $r: [0,1]\to[0,1]$ non-decreasing. Note that this
construction implies
\(\tfrac{d}{dt}\log \kappa_0(t) \leq
  \tfrac{d}{dt}\log \kappa_{\mathrm{mid}}(t),
  \tfrac{d}{dt}\log \kappa_1(t)\) \citep{gat2024discrete}.

In our experiments we use the following parametrization, 
\begin{equation}
    s(t) = 1 - t^p, 
    \qquad r(t) = t^q,
\end{equation}
with $p=2$ and $q=2/3$. Here $s(t)$ controls the overall
decay from source to non-source components, while $r(t)$ redistributes
the decaying mass between the middle and the target distributions.
This choice yields a unimodal, bell-shaped $\kappa_{\mathrm{mid}}(t)$,
peaking at $t^\star = (q/(p+q))^{1/p},$
which for $p=2$ and $q=2/3$ gives $t^\star = 0.5$, see Figure~\ref{fig:scheduler}.
Consequently, the middle distribution dominates near the midpoint of
the trajectory, aligning with our design goal of exposing the model to
acoustically plausible intermediate states. At $t=0$ and $t=1$, the path reduces to pure source and pure target distributions, respectively.

\subsection{Training and Optimization Details}\label{app:train_details}

\paragraph{Architecture.} Both \ourmethod{} and \ourmethod{}-flash uses a Whisper (\texttt{large-v3}) encoder ($\sim 630$M parameters) as the audio encoder. We kept the encoder frozen during training. 
The decoders uses the  DiT~\citep{peebles2023scalable} architecture. The \ourmethod{} decoder is composed of $16$ decoder blocks with $20$ attention heads and $1280$ hidden dimension with a total of $580$M parameters. The \ourmethod{}-flash decoder contains $4$ layers with $20$ attention heads and $1280$ hidden dimension with a total of $250$M parameters. The audio conditioned distribution $p_{\textrm{mid}}$ contains a single transformer block together with a projection layer to the vocabulary size, with a total of $28$M parameters.

\paragraph{Optimization.} We train the models using the AdamW optimizer~\citep{loshchilov2018decoupled} with a warmup of $2500$ steps and a peak learning rate of $3e-4$. We use a batch size of $240$ and train \ourmethod{} and \ourmethod{}-flash for 800K and 250K iterations, respectively. 

\paragraph{Training.} During training we sample $t$ from a uniform distribution on $[0,1]$. We randomly drop the audio conditioning with probability of $0.1$. We adapt the same tokenizer as in~\cite{radford2023robust}, and follow \cite{radford2023robust} to prepend the special tokens:

\begin{center}
    \texttt{<|startoftranscript|><|lang|><|transcribe|><|notimestamps|>}.
\end{center} 

We allow for random replacement of the language token according to the path $p_t$ with probability $0.2$. Furthermore, with probability $p_{\text{prompt}}$, we sample a text prefix from the input utterance which remains unchanged for all $t$. This allows for using \ourmethod{} for speculative decoding with AR models.

\begin{table}[t]
    \centering
    \begin{adjustbox}{max width=\textwidth}
    \begin{tabular}{ccccccccccc}
    \toprule
    NFE/Ens. Size & 
    \multicolumn{2}{c}{No Scoring} & 
    \multicolumn{2}{c}{MBR} & 
    \multicolumn{2}{c}{Whisper Score} & 
    \multicolumn{2}{c}{Oracle} \\
    \cmidrule(lr){2-3} \cmidrule(lr){4-5} \cmidrule(lr){6-7} \cmidrule(lr){8-9}
      & WER & RTFx & WER & RTFx & WER & RTFx & WER & RTFx \\
    \midrule
      4/1   & 9.12 & 134.70 & --   & --    & --   & --    & --   & -- \\
      4/8   & --   & 64.26  & 7.67 & 62.68 & 7.11 & 44.80 & 6.44 & -- \\
      4/16  & --   & 38.38  & 7.49 & 36.29 & 6.84 & 28.15 & 6.01 & -- \\
    \midrule
      8/1   & 8.61 & 65.59  & --   & --    & --   & --    & --   & -- \\
      8/8   & --   & 34.96  & 7.50 & 34.49 & 6.87 & 28.28 & 6.15 & -- \\
      8/16  & --   & 21.54  & 7.35 & 20.86 & 6.59 & 17.89 & 5.74 & -- \\
    \midrule
      16/1  & 8.41 & 32.23  & --   & --    & --   & --    & --   & -- \\
      16/8  & --   & 17.87  & 7.41 & 17.75 & 6.73 & 15.95 & 6.03 & -- \\
      16/16 & --   & 10.98  & 7.28 & 10.80 & 6.49 & 9.94  & 5.64 & -- \\
    \bottomrule
    \end{tabular}
    \end{adjustbox}
    \caption{Effect of number of function evaluations (NFE) and ensemble size on WER and runtime (RTFx) under different scoring methods on the MLS dataset. Candidate generation uses $\tau=0.1$, and $\tau=0.01$ when sampling a single transcript.}
    \label{tab:nfe_sampling_size}
\end{table}

\paragraph{Datasets.} \label{app:datasets} We train the models using a mix of public datasets covering $8$ languages, namely, English, German, Spanish, French, Portuguese, Italian, Chinese, and Japanese, with a total of $\sim 15$K hours. Specifically, we consider LibriSpeech (LS)~\citep{panayotov2015librispeech} (English read audiobooks), Multilingual LibriSpeech (MLS)~\citep{Pratap2020MLSAL} (multilingual read audiobooks), AMI~\citep{carletta2005ami} (far-field meeting speech), Earnings-22~\citep{Rio2022Earnings22AP} (financial earnings calls), VoxPopuli~\citep{wang2021voxpopuli} (multilingual parliamentary speeches), Tedlium~\citep{rousseau2012ted} (TED talks, prepared speech), CommonVoice-13~\citep{ardila2019common} (crowdsourced read speech), Reazon~\citep{fujimoto2016reazonspeech} (Japanese read speech), and AISHELL~\citep{bu2017aishell} (Mandarin read speech).

\paragraph{Runtime dataset.} We report RTFx for \ourmethod{} and compared baselines to measure their runtime efficiency. We note that runtime measurements in ASR models heavily depend on the input audio duration and output sequence length. Therefore, we curate the dataset from LibriSpeech-clean as follows: first, we bin all utterances by duration, ranging from 0 to 30 seconds in 5-second increments; then, we uniformly draw an equal number of samples from each bin. We report summary statistics of ground-truth sequence length and audio duration for the curated dataset in Table~\ref{tab:dur_seq_stats}.

\begin{table}[h]
\centering
\small
\begin{tabular}{lcccc}
\toprule
 & \textbf{Mean} & \textbf{Std} & \textbf{Min} & \textbf{Max} \\
\midrule
Duration (Sec.)            & 12.05 & 6.85  & 1.29  & 28.58  \\
Sequence length (\# tokens)  & 64.47 & 36.90 & 2.00  & 155.00 \\
\bottomrule
\end{tabular}
\caption{Summary statistics for the curated runtime dataset.}
\label{tab:dur_seq_stats}
\end{table}

\subsection{Sampling}\label{app:sampling_details}

At inference, we use a simple linear scheduler $\kappa_0(t)=1-t$ and $\kappa_1(t)=1-\kappa_0(t)$. We select $t$ uniformly over $[0,1]$, i.e. for $K$ NFE set the step size $h=1/K$, and $t=1/K, 2/K,...,K/K$. We sample using the efficient algorithm in \cite{shaul2024flow}. When evaluation \ourmethod{} with a single sample we set $\tau=0.01$, and for generating multiple candidates we use $\tau=0.1$. We generate sample with a fixed sequence length of $L=144$. We cache the audio projection and per-block cross-attention K/V tensors so these are computed once per utterance and reused across all generation steps.


\subsection{Runtime Results}

Runtime-related metrics like RTF and its inverse RTFx were measured on a single L40s GPU. For fair comparison, all methods were evaluated with a batch size of 1, with full-precision and without any compilation.

\section{Additional Results}

\subsection{Accuracy-efficiency trade-off}
In Section~\ref{sec:accuracy-efficiency}, we showed that a key advantage of \ourmethod{} is the ability to tune the accuracy--runtime trade-off by adjusting the number of NFE and the candidate ensemble size. Here, we expand the evaluation to provide finer control for practitioners: we run a grid search of $\mathrm{NFE}\in\{4,8,16\}$ and ensemble size $\in\{1,8,16\}$. Results are reported in Table~\ref{tab:nfe_sampling_size}.

\subsection{Drax-flash Results}

We evaluate \ourmethod{}-flash using the EN benchmark and the MLS dataset. The results are provided in Tables~\ref{tab:en_bench_flash} and ~\ref{tab:mls_flash}.
In addition, in Figure~\ref{fig:flash_runtime} we visualize the RTFx of \ourmethod{}-flash, Whisper, and Qwen2-Audio as a function of the transcription sequence length. \ourmethod{}-flash provides significant improvemnets in runtime.

\begin{figure}
    \centering
    \includegraphics[width=0.5\linewidth]{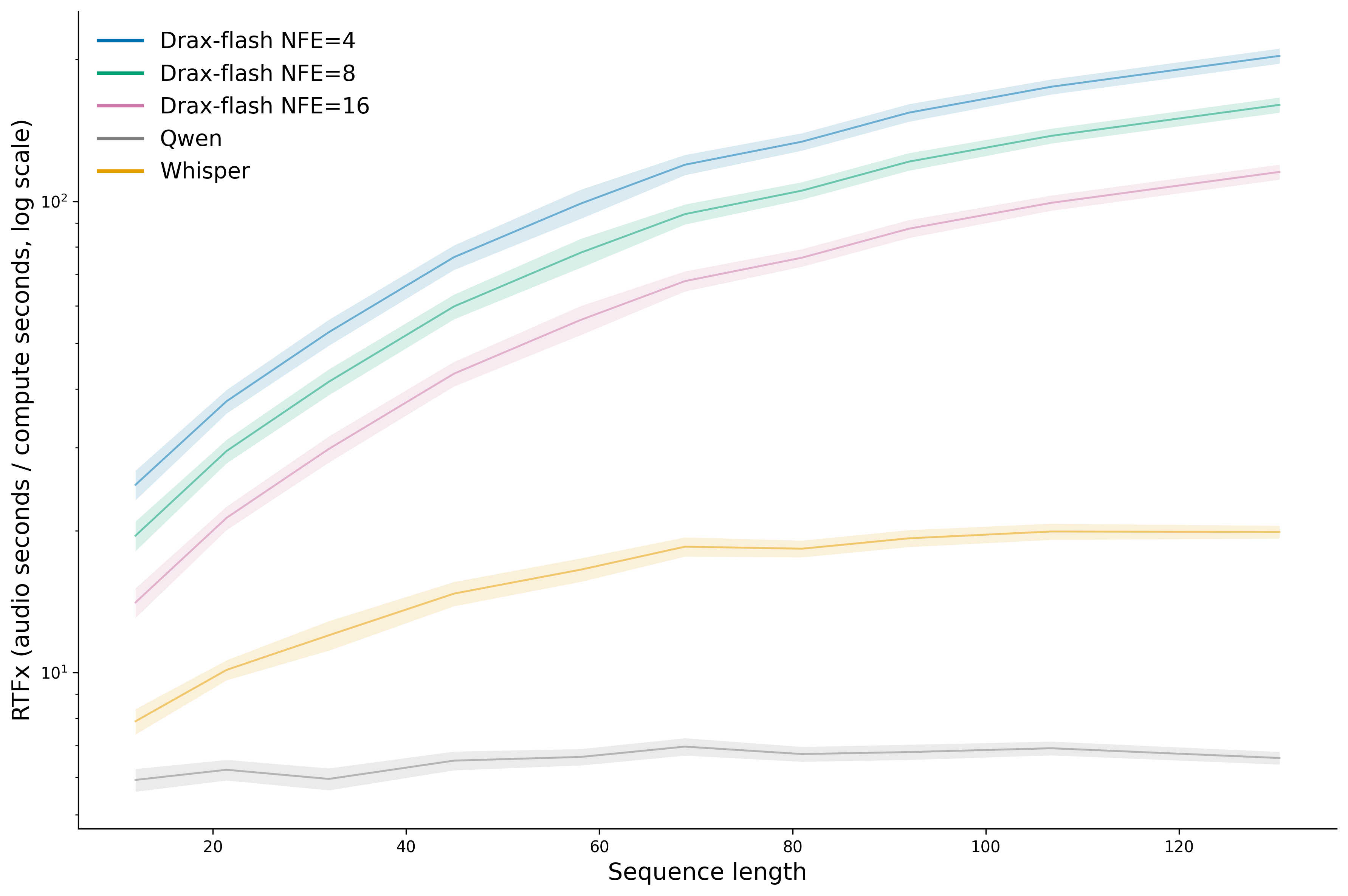}
    \caption{Runtime comparison for \ourmethod{}-flash.}
    \label{fig:flash_runtime}
\end{figure}

\begin{table}[th]
    \centering
    \begin{adjustbox}{max width=\textwidth}
    \begin{tabular}{cccccccccc}
    \toprule
    & LS Clean & LS Other & AMI & Earnings 22 & VoxPopuli & Tedlium  &  Average \\
     \cmidrule(lr){2-7} \cmidrule(lr){8-8}
    &  \multicolumn{7}{c}{WER$\downarrow$} & Params (B) & RTFx$\uparrow$
    \\
    \midrule
      \ourmethod{}-flash  &  $4.91$ & $8.92$ & $45.78$ & $22.96$ & $13.85$ & $8.53$ & $17.49$ &  $0.8$ & $84.94$ \\
    \bottomrule
    \end{tabular}
    \end{adjustbox}
    \caption{English benchmark, \ourmethod{}-flash.}
    \label{tab:en_bench_flash}
\end{table}

\begin{table}[th]
    \centering
    \begin{adjustbox}{max width=\textwidth}
    \begin{tabular}{cccccccccc}
    \toprule
    & \multicolumn{5}{c}{MLS}\\
    \cmidrule(lr){2-6} 
    & DE & ES & FR & IT & PT  \\
    \midrule
      \ourmethod{}-flash  &  $13.77$ & $10.15$ & $14.61$ & $20.90$ & $20.99$  \\
    \bottomrule
    \end{tabular}
    \end{adjustbox}
    \caption{\ourmethod{}-flash WER results for the Multilingual LibriSpeech dataset.}
    \label{tab:mls_flash}
\end{table}

\begin{figure}[t]
    \centering
    \begin{subfigure}[b]{0.45\textwidth}
        \centering
        \includegraphics[width=\textwidth]{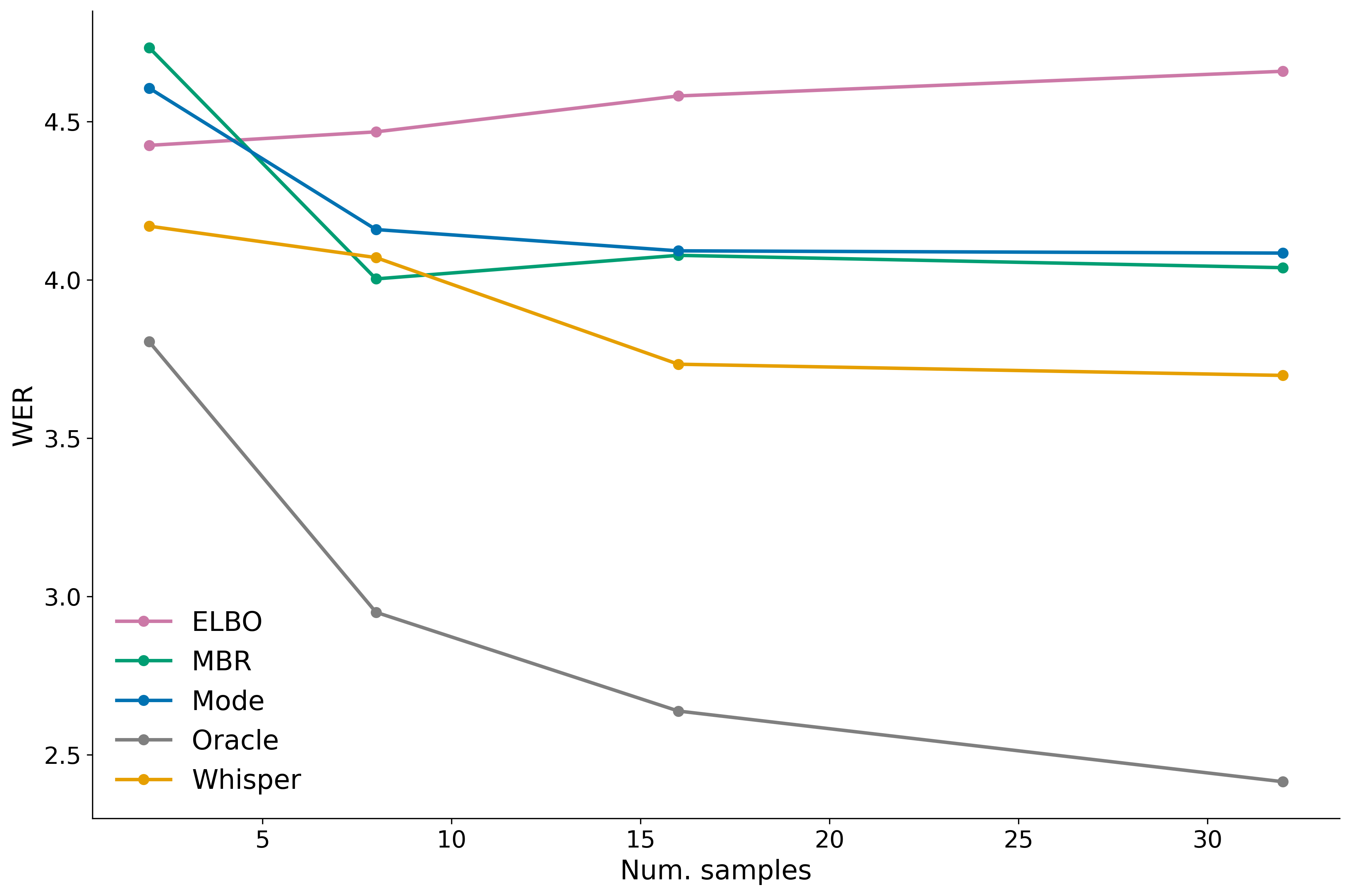}
        \caption{$\text{Temperature}=0.1$}
        \label{fig:ens_ted_0.1}
    \end{subfigure}
    \hfill
    \begin{subfigure}[b]{0.45\textwidth}
        \centering
        \includegraphics[width=\textwidth]{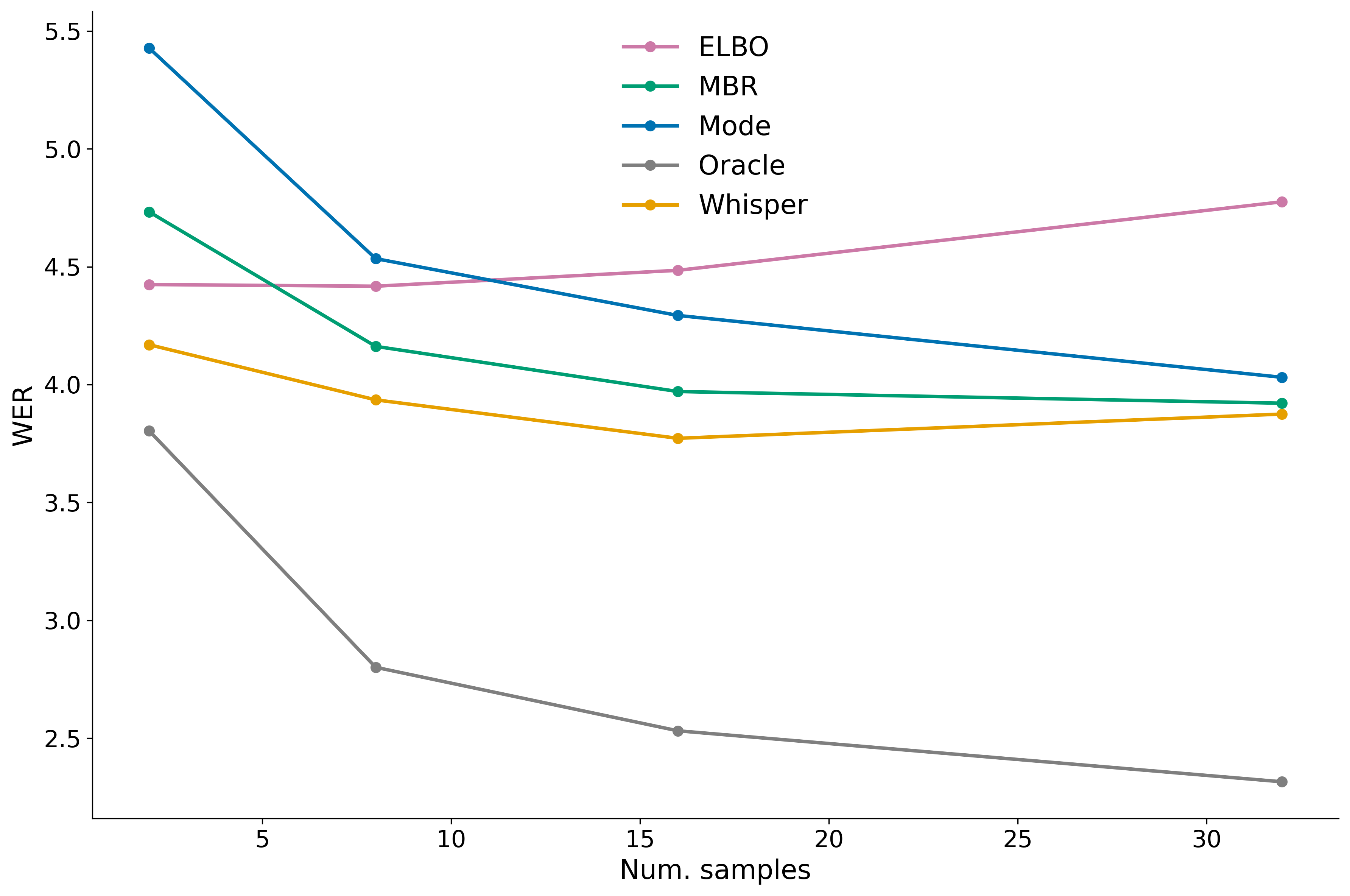}
        \caption{$\text{Temperature}=0.25$}
        \label{fig:ens_ted_0.25}
    \end{subfigure}
    
    \vspace{0.5cm} 

    \begin{subfigure}[b]{0.45\textwidth}
        \centering
        \includegraphics[width=\textwidth]{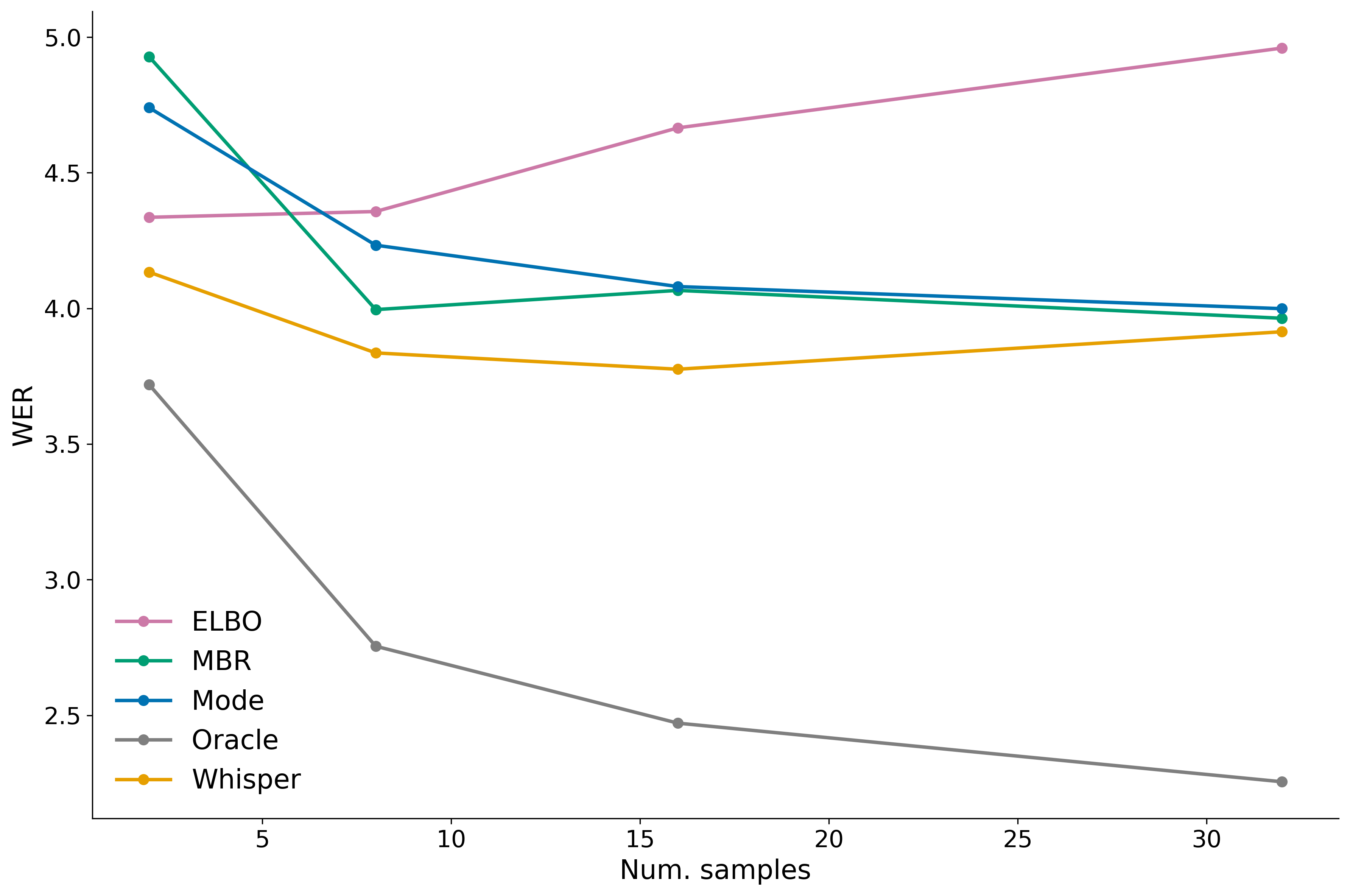}
        \caption{$\text{Temperature}=0.5$}
        \label{fig:ens_ted_0.5}
    \end{subfigure}
    \hfill
    \begin{subfigure}[b]{0.45\textwidth}
        \centering
        \includegraphics[width=\textwidth]{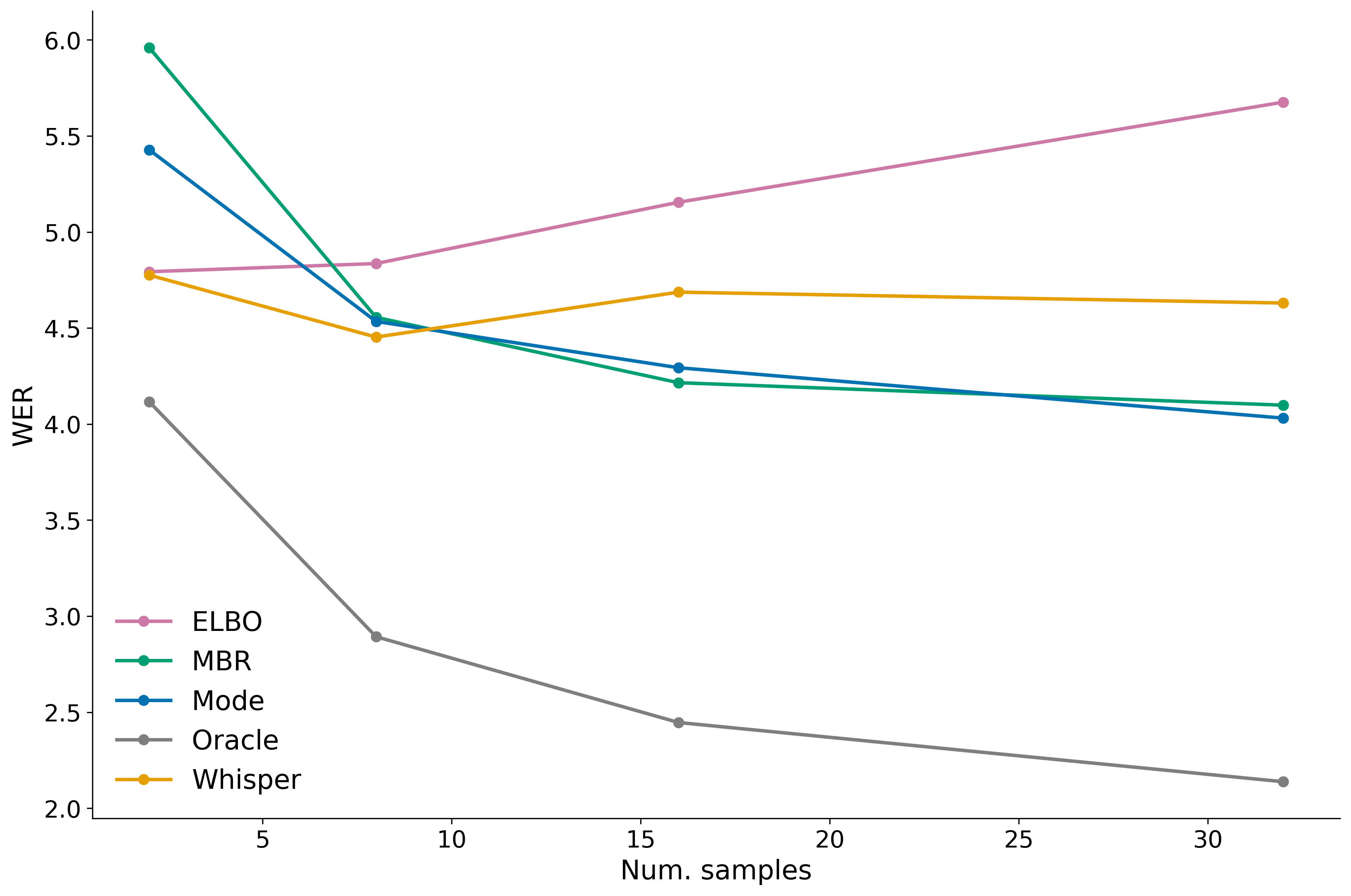}
        \caption{$\text{Temperature}=1$}
        \label{fig:ens_ted_1}
    \end{subfigure}
    
    \caption{Ensemble prediction (candidate scoring) for Tedlium.}
    \label{fig:ens_ted}
\end{figure}

\begin{figure}[t]
    \centering
    \begin{subfigure}[b]{0.45\textwidth}
        \centering
        \includegraphics[width=\textwidth]{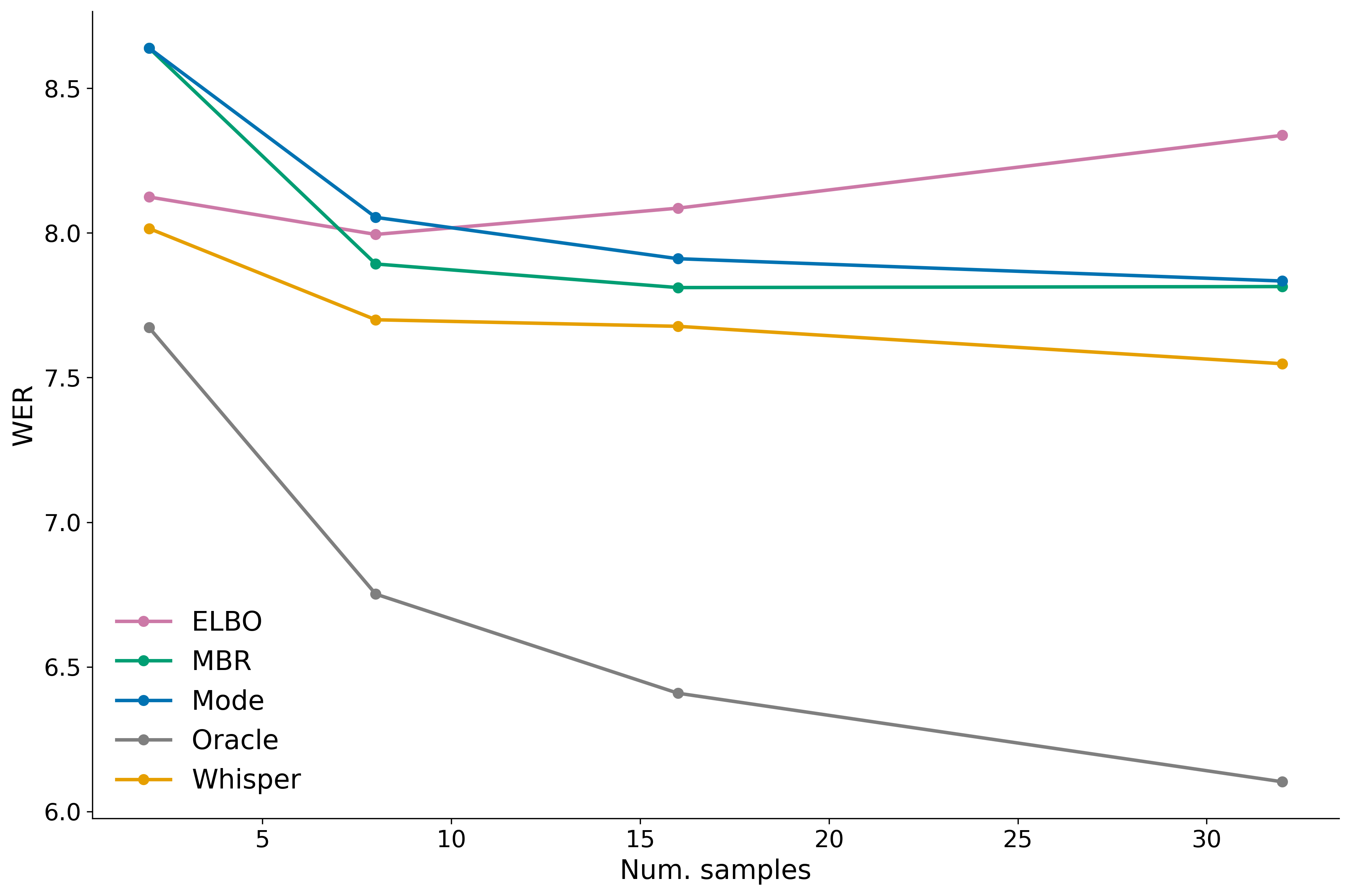}
        \caption{$\text{Temperature}=0.1$}
        \label{fig:ens_vox_0.1}
    \end{subfigure}
    \hfill
    \begin{subfigure}[b]{0.45\textwidth}
        \centering
        \includegraphics[width=\textwidth]{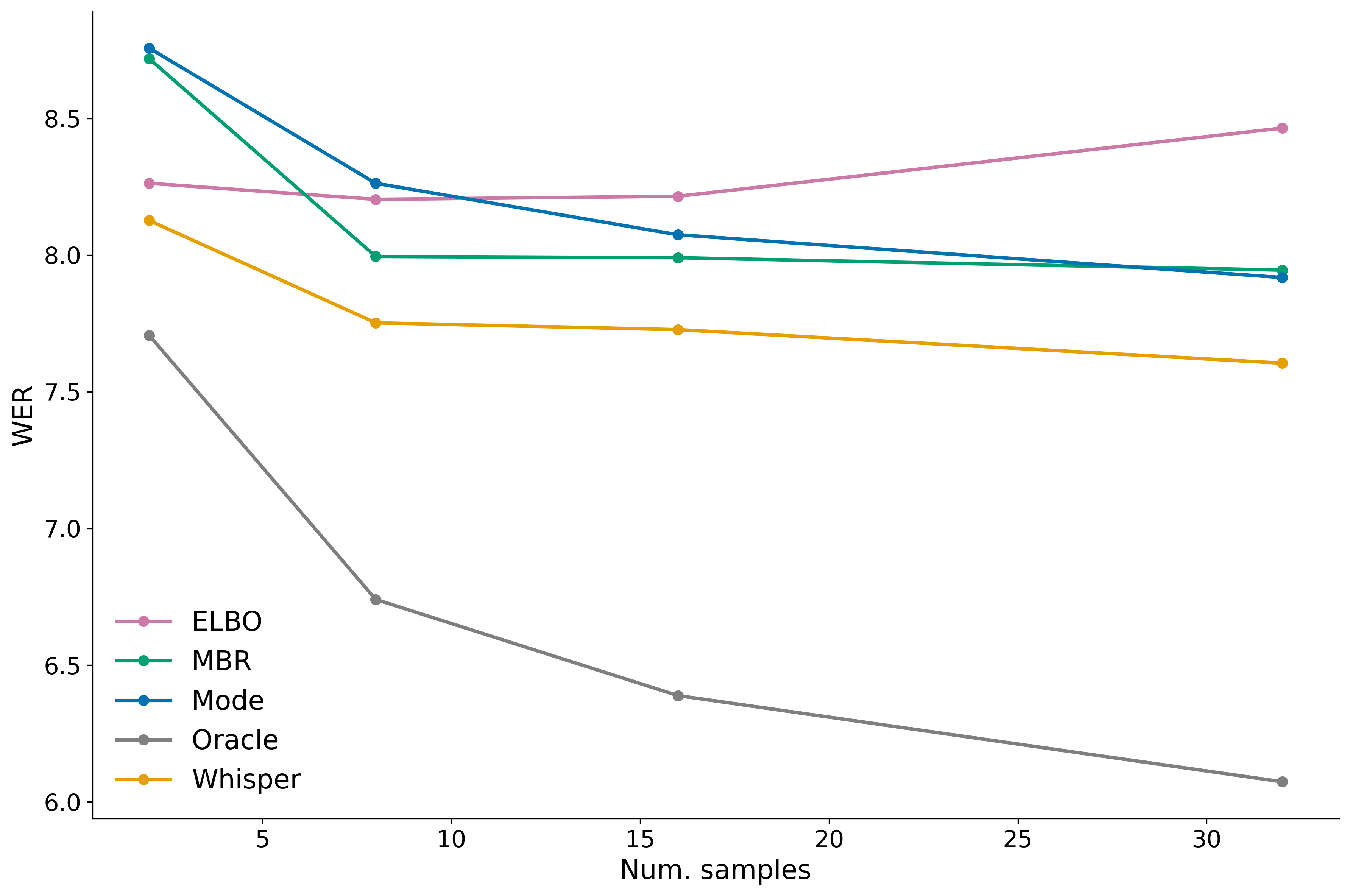}
        \caption{$\text{Temperature}=0.25$}
        \label{fig:ens_vox_0.25}
    \end{subfigure}
    
    \vspace{0.5cm} 

    \begin{subfigure}[b]{0.45\textwidth}
        \centering
        \includegraphics[width=\textwidth]{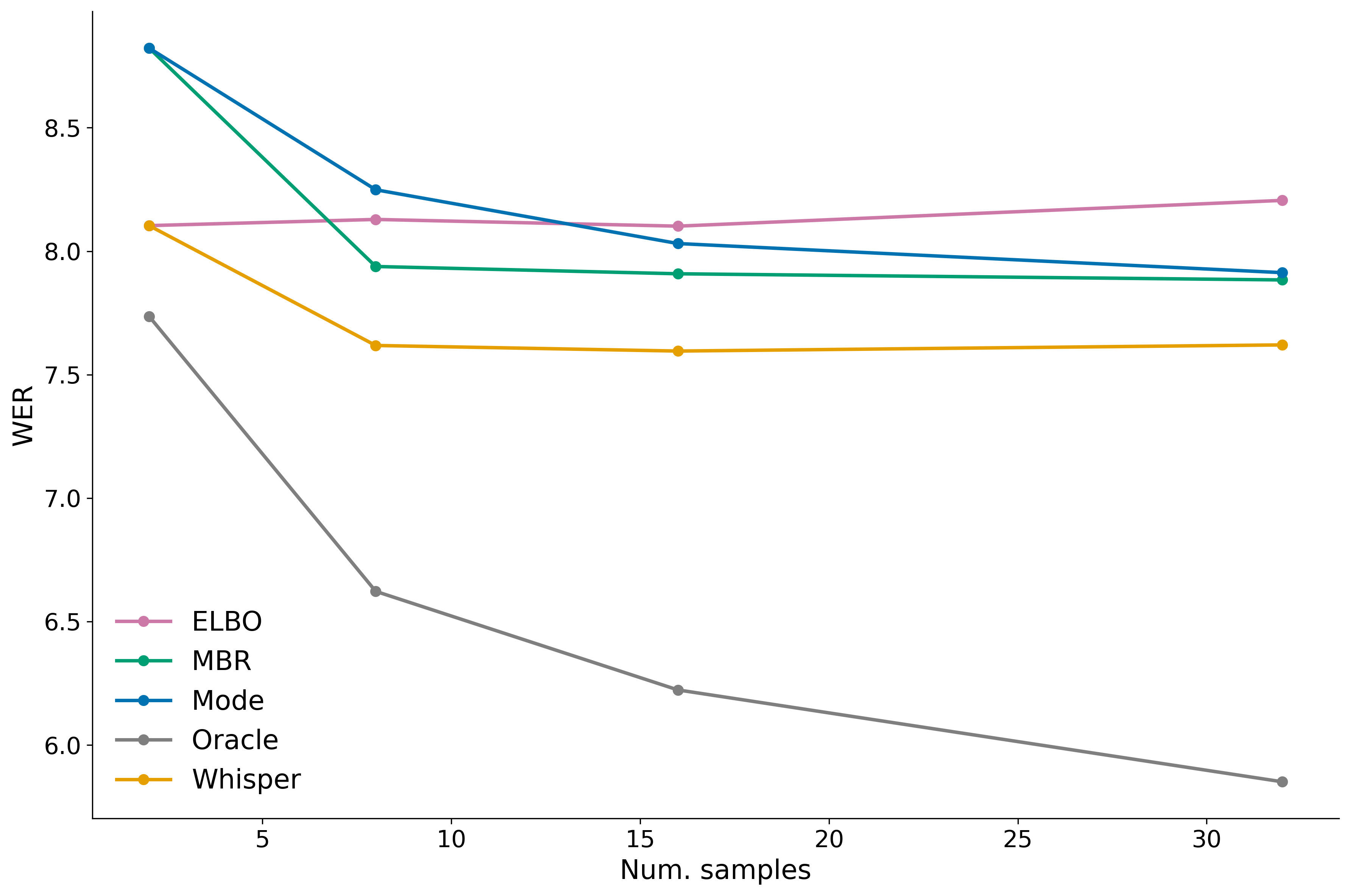}
        \caption{$\text{Temperature}=0.5$}
        \label{fig:ens_vox_0.5}
    \end{subfigure}
    \hfill
    \begin{subfigure}[b]{0.45\textwidth}
        \centering
        \includegraphics[width=\textwidth]{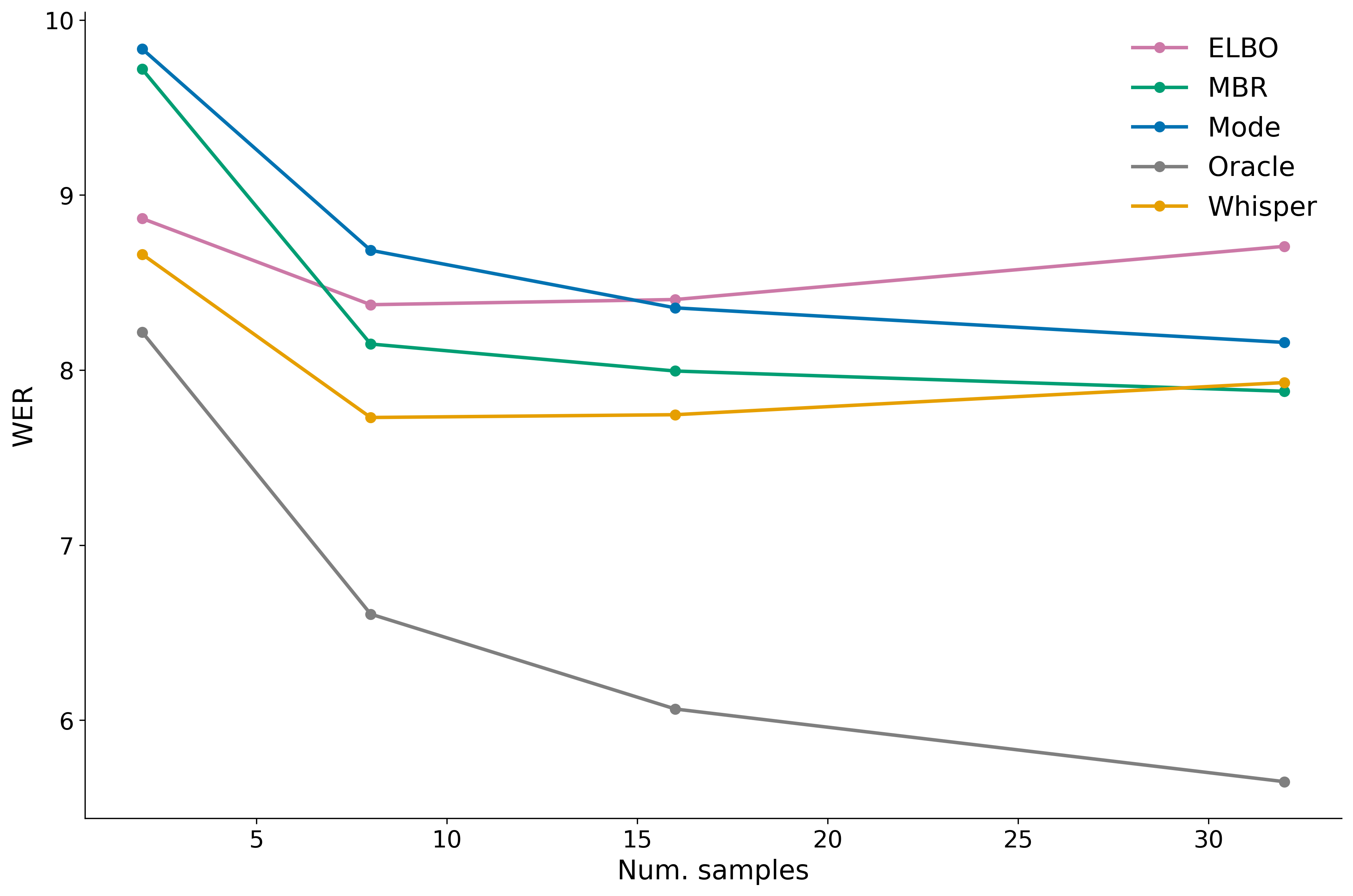}
        \caption{$\text{Temperature}=1$}
        \label{fig:ens_vox_1}
    \end{subfigure}
    
    \caption{Ensemble prediction (candidate scoring) for VoxPopuli (EN).}
    \label{fig:ens_vox}
\end{figure}

\begin{figure}[t]
    \centering
    \begin{subfigure}[b]{0.45\textwidth}
        \centering
        \includegraphics[width=\textwidth]{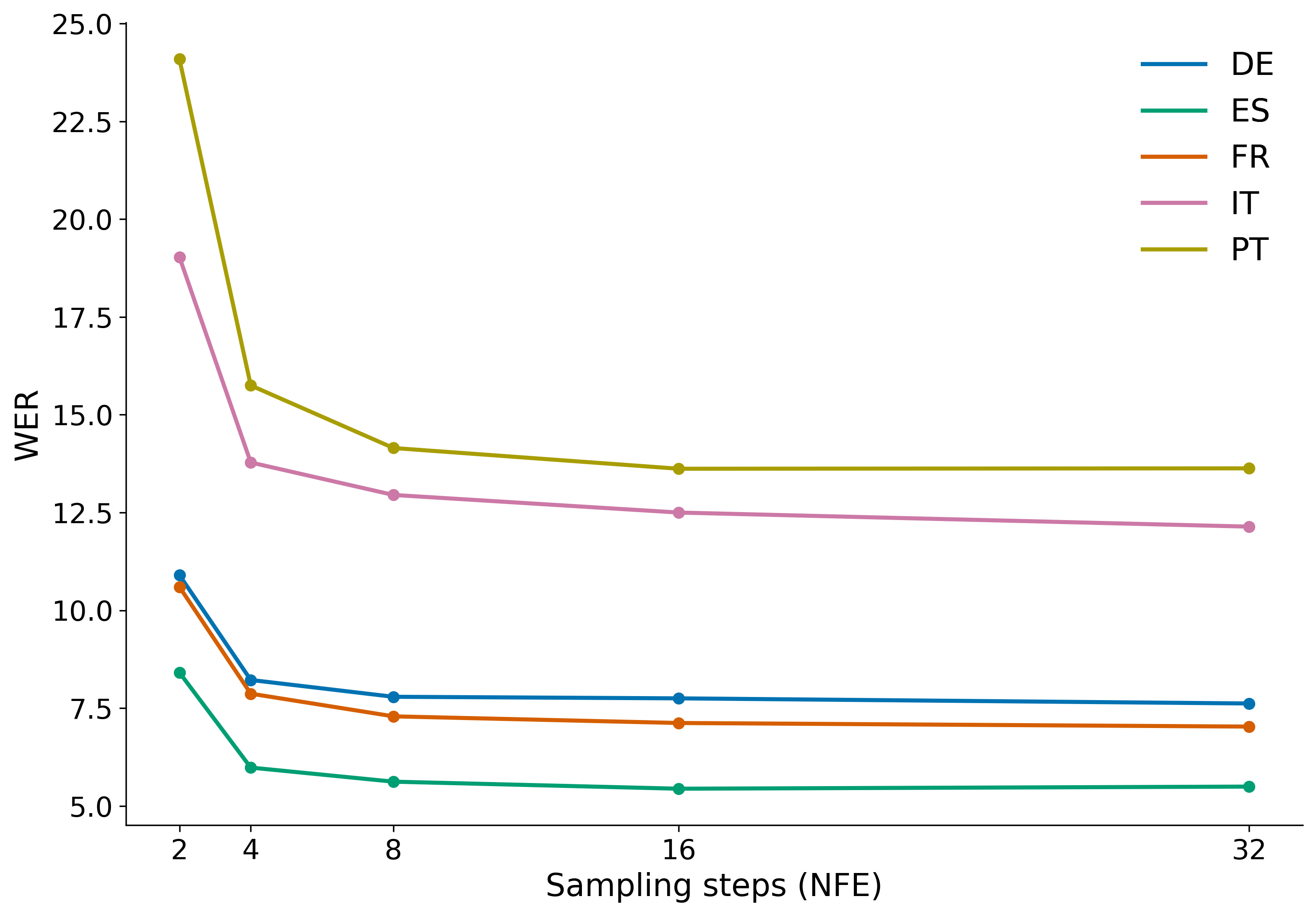}
        \caption{MLS}
        \label{fig:sampling_step_mls}
    \end{subfigure}
    \hfill
    \begin{subfigure}[b]{0.45\textwidth}
        \centering
        \includegraphics[width=\textwidth]{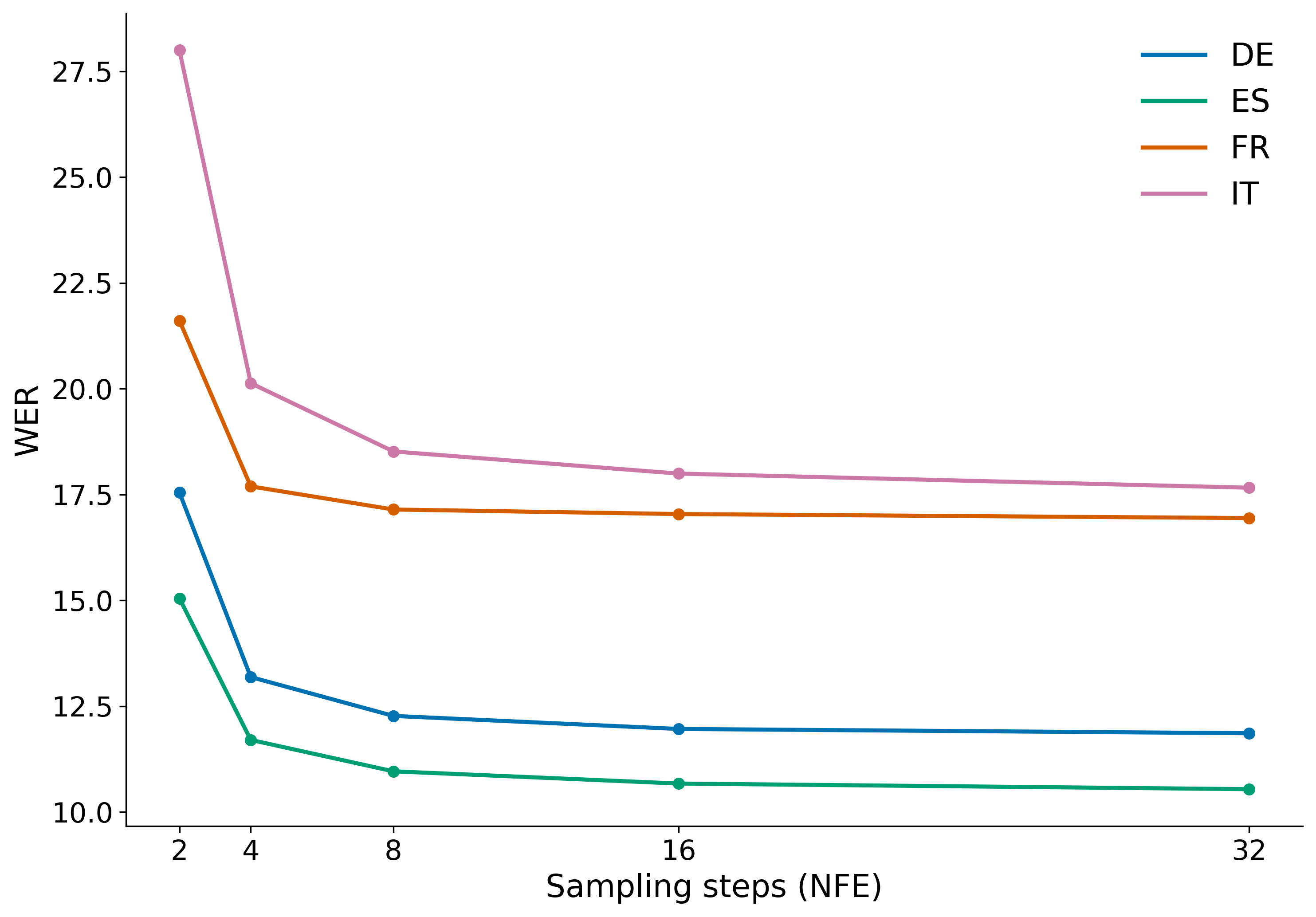}
        \caption{VoxPopuli (ML)}
        \label{fig:sampling_step_vox}
    \end{subfigure}
    \caption{Effect of sampling steps (NFE).}
    \label{fig:sampling_steps}
\end{figure}

\subsection{Extended English Benchmark Results}\label{app:en_extended}

We provide additional Drax results for the English benchmark, under different NFE, candidate set size and scoring method setups. The results reported in Table~\ref{tab:en_bench_extended} show Drax provide significant control over the accuracy-efficiency trade-off, allowing a user to select its optimal operation point.

\begin{table}[th]
    \centering
    \begin{adjustbox}{max width=1.\textwidth}
    \begin{tabular}{cccccccccc}
    \toprule
    & LS Clean & LS Other & AMI & Earnings 22 & VoxPopuli & Tedlium  &  Average \\
     \cmidrule(lr){2-7} \cmidrule(lr){8-8}
    &  \multicolumn{7}{c}{WER$\downarrow$} & Params (B) & RTFx$\uparrow$
    \\
    \midrule
      \ourmethod{} (4/1)  & $2.9$ & $6.3$ & $14.3$ & $16.3$ & $9.0$ &$5.7$ & $9.1$ & $1.2$ & $134.7$ \\
      \ourmethod{} (8/1)  & $2.7$ & $5.8$ & $13.8$ & $15.3$ & $8.6$ & $5.0$ & $8.5$ & $1.2$ & $65.5$ \\
      \ourmethod{} (16/1)  &  $2.6$ & $5.7$ & $13.9$ & $15.2$ & $8.6$ & $4.8$ & $8.4$ &  $1.2$ & $32.2$ \\
      \midrule
      \ourmethod{} (MBR, 8/16)  &  $2.6$ & $5.3$ & $13.6$ & $14.6$ & $8.0$ & $4.1$ & $8.0$ &  $1.2$ & $20.8$\\
      \ourmethod{} (Whisper, 8/16)  &  $2.2$ & $4.7$ & $12.7$ & $13.7$ & $7.4$ & $3.7$ & $7.4$ &  $2.1$ & $17.8$ \\
      \ourmethod{} (MBR, 16/16)  &  $2.5$ & $5.2$ & $13.6$ & $14.6$ & $7.9$ & $4.1$ & $7.9$ &  $1.2$ & $10.8$ \\
      \ourmethod{} (Whisper, 16/16)  &  $2.1$ & $4.7$ & $12.6$ & $13.7$ & $7.5$ & $3.7$ & $7.3$ &  $2.1$ & $9.9$ \\
    \bottomrule
    \end{tabular}
    \end{adjustbox}
    \caption{Extended English results for Drax using datasets from the Hugging Face Open ASR benchmark~\citep{open-asr-leaderboard}.}
    \label{tab:en_bench_extended}
\end{table}

\subsection{Sampling Temperature}

We also evaluated \ourmethod{} under different sampling temperatures on the Multilingual LibriSpeech (MLS) benchmark. As expected, increasing temperature leads to more diverse generations but also higher error rates. Lower temperatures (e.g., 0.01-0.1) yield the best WER across languages, while higher values such as 1.0 noticeably degrade performance. Table \ref{tab:mls_temp} reports total WERs for each language at different temperatures.

\begin{table}[t]
    \centering
    \begin{adjustbox}{max width=\textwidth}
    \begin{tabular}{lccccc}
    \toprule
    Temp & DE & ES & FR & IT & PT \\
    \midrule
    $0.01$ & $7.75$ & $5.44$ & $7.12$ & $12.50$ & $13.62$ \\
    $0.1$ & $7.64$ & $5.46$ & $7.20$ & $12.42$ & $13.83$ \\
    $0.25$ & $7.73$ & $5.49$ & $7.20$ & $12.53$ & $13.80$ \\
    $0.5$ & $7.93$ & $5.61$ & $7.39$ & $12.80$ & $14.32$ \\
    $1.0$ & $8.77$ & $6.41$ & $8.18$ & $14.14$ & $16.26$ \\
    \bottomrule
    \end{tabular}
    \end{adjustbox}
    \caption{Total WER of \ourmethod{} on MLS at different sampling temperatures. Lower values yield better recognition accuracy.}
    \label{tab:mls_temp}
\end{table}

\subsection{Effect of NFE}

The number of NFE is a key hyperparameter for NAR based generative models. Here we show that the \ourmethod{} model achieves high quality sampling with as few as 4-16 sampling steps. Figure~\ref{fig:sampling_steps} show the per-language WER for the MLS and VoxPupoli dataset as a function of NFE. The plot shows that \ourmethod{} WER drops quickly, with relevantly small improvement for $\text{NFE}\geq 4$.

\subsection{Sampling with and without $p_{\text{mid}}$}\label{app:p_mid_results}

As discussed in the Section~\ref{sec:method}, the intermediate distribution $p_{\text{mid}}$ 
is introduced as an auxiliary component during training in order to better align the training and inference occupancies. Generally, we do not use $p_{\text{mid}}$ 
during inference. To verify this design choice, we compare generation with and 
without $p_{\text{mid}}$ at sampling time (see Appendix~\ref{app:tri-velocity}). The results, reported in Figure~\ref{fig:p_mid_ablation}, show that including $p_{\text{mid}}$ during generation consistently hurts performance across datasets. This supports our design choice to treat $p_{\text{mid}}$ as a training-only component.

\begin{figure}[h]
    \centering
    \includegraphics[width=0.7\linewidth]{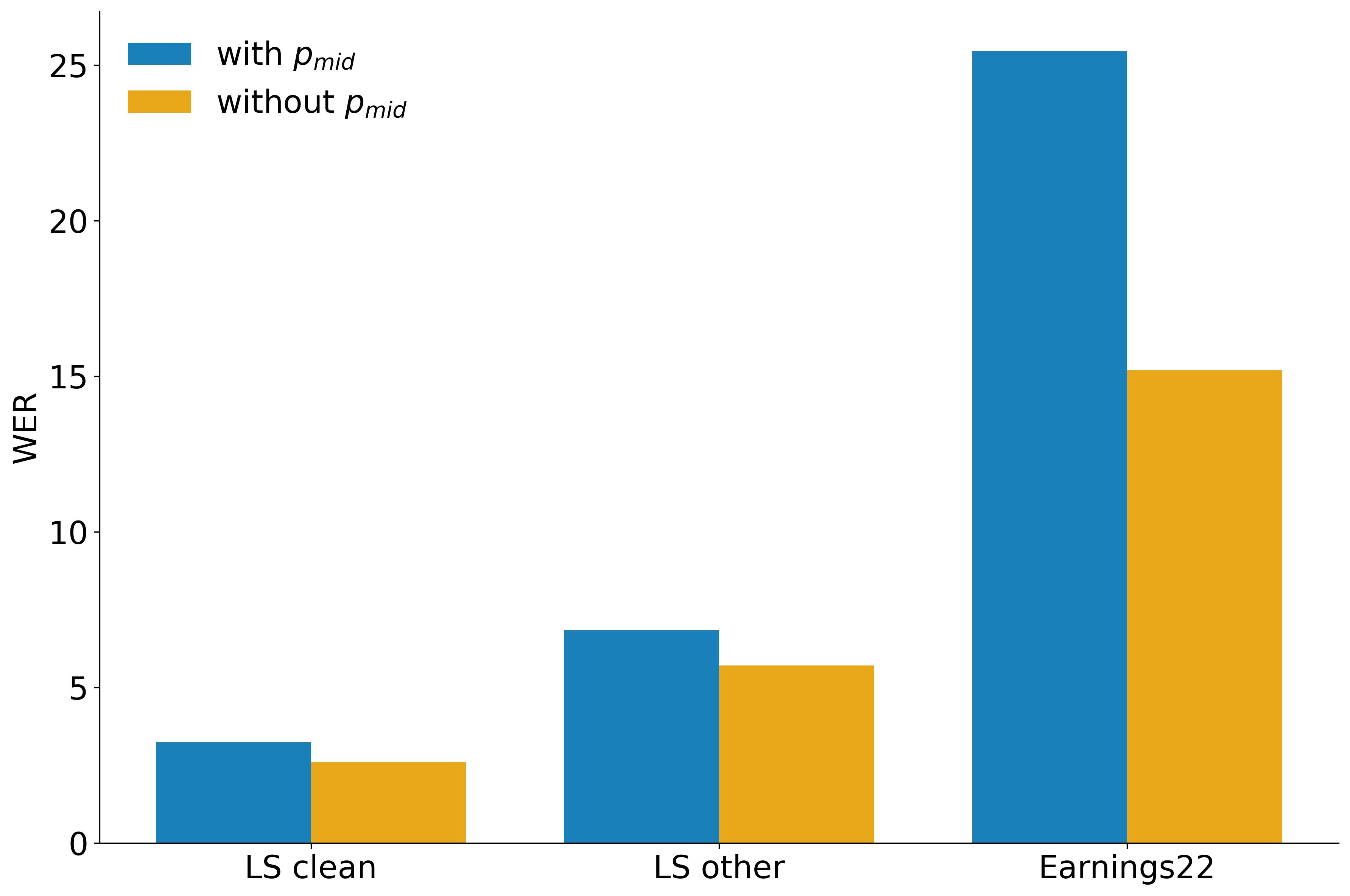}
    \caption{Effect of using $p_{\text{mid}}$ during sampling. 
    Results show that including $p_{\text{mid}}$ at inference degrades accuracy, 
    confirming our choice to use it only during training.}
    \label{fig:p_mid_ablation}
\end{figure}

\subsection{Samples from Audio Distribution}

Table~\ref{tab:audio_dist_samples} shows an example of samples from the learned middle distribution $p_\textrm{mid}$. We can see that some samples are acoustically plausible but imperfect.

\subsection{Generation Path}
To illustrate the decoding dynamics, Table~\ref{tab:generation_path} traces \ourmethod{}’s stepwise refinement on a single example. Each row corresponds to a generation step and each column to a token position; cells show the token committed at that step, while “\_” denotes positions that remain unchanged from the initial noisy state. The table shows how tokens progressively stabilize across steps, first in shorter blocks and then as longer spans, until the full sequence is recovered.

\begin{table}[h]
\centering
\begin{adjustbox}{max width=\textwidth}
\begin{tabular}{lcccccccccccc}
\toprule
Step & It & became & the & band's & most & successful & single & worldwide. \\
\midrule
1  & \_ & \_ & \_ & \_ & most & \_ & \_ & \_ \\
2  & \_ & \_ & \_ & \_ & most & \_ & \_ & \_ \\
3  & \_ & \_ & \_ & \_ & most & \_ & \_ & \_ \\
4  & \_ & \_ & \_ & \_ & most & \_ & \_ & \_ \\
5  & \_ & \_ & \_ & \_ & most & \_ & \_ & worldwide- \\
6  & \_ & \_ & \_ & \_ & most & \_ & \_ & worldwide- \\
7  & \_ & \_ & \_ & \_ & most & successful & single & worldwide- \\
8  & \_ & \_ & \_ & \_ & most & successful & single & worldwide- \\
9  & \_ & \_ & \_ & \_ & most & successful & single & worldwide- \\
10 & \_ & \_ & \_ & \_ & most & successful & single & worldwide- \\
11 & It & \_ & \_ & \_ & most & successful & single & worldwide- \\
12 & It & became & \_ & band & most & successful & single & worldwide- \\
13 & It & became & \_ & band's & most & successful & single & worldwide- \\
14 & It & became & the & band's & most & successful & single & worldwide- \\
15 & It & became & the & band's & most & successful & single & worldwide. \\
16 & It & became & the & band's & most & successful & single & worldwide. \\
\bottomrule
\end{tabular}
\end{adjustbox}
\caption{Example generation path over 16 steps. Each column corresponds to a token position. ``\_'' marks tokens that remain unchanged from the initial random state at that step.}
\label{tab:generation_path}
\end{table}

\begin{table}[t]
    \centering
    \begin{adjustbox}{max width=\textwidth}
    \begin{tabular}{cll}
    \toprule
    Language & Ground Truth & Samples \\
    \midrule
    EN & And one of the most important  & And one the most important \\
    & & And the one most important \\
    & & And one of the most important \\
    \midrule
     ES & Esto ha pasado y debe pararse de una forma tajante. & esto ha pasado pasado y fair debe per unaarse una una forma \\
      & & esto ha ha pasadoado y debearsearse uno forma unaante un. \\
      & & esto ha ha pasado poradoves debe unaarse una formaaj de dos. \\
    \midrule
     FR & l'énergie solaire en Europe & l l lireaireireé europ.\\
      & & l' sol'èreurlement europe. \\
      & & l'est sol solèreient europe.\\
     \bottomrule
    \end{tabular}
    \end{adjustbox}
    \caption{Samples from the learned audio distribution $p_{\textrm{mid}}$.}
    \label{tab:audio_dist_samples}
\end{table}





\end{document}